\documentclass[final,3p,times]{elsarticle}

\usepackage{amssymb}
\usepackage{amsmath}

\usepackage{caption}
\usepackage{amsmath,amssymb}
\usepackage{ntheorem}
\usepackage{amssymb}
\usepackage{hyperref}
\usepackage{tikz}
\usepackage{xcolor}
\usepackage{hyperref}
\usepackage{subfigure}
\usepackage{graphicx}
\usepackage{setspace}
\usepackage{soul, color, xcolor}    
\usepackage{epstopdf}
\usepackage{multirow}  
\usepackage{amsmath}   
\usepackage{graphicx}  
\usepackage{array}     
\usepackage{booktabs}  
\usepackage[ruled,vlined]{algorithm2e}    
\usepackage{algorithmic}
\usepackage{microtype}

\usepackage{multirow}
\usepackage{diagbox} 
\usepackage{booktabs} 

\makeatletter

\newcommand{\Rmnum}[1]{\expandafter\@slowromancap\romannumeral #1@}
\makeatother

\newtheorem{theorem}{Theorem}
\newtheorem{definition}{Definition}

\newtheorem{proof}{Proof}

\journal{Signal Processing}

\begin{document}

\begin{frontmatter}



\title{	A frame-theoretic two-dimensional multi-window graph fractional Fourier transform for product graph signal analysis
}


\author[1]{Linbo~Shang\corref{cor1}}

\ead{ashanglinbo@163.com}
\cortext[cor1]{Corresponding author.}
\address[1]{School of Mathematics and Statistics, Nanjing University of Information Science and Technology, Nanjing 210044, China}

\tnotetext[mytitlenote]{This work was supported by the Postgraduate Research \& Practice Innovation Program of Jiangsu Province under Grant KYCX25\_1559.}

\begin{abstract}
The analysis of multi-dimensional graph signals on complex structured domains remains a fundamental challenge, as existing windowed graph fractional Fourier transform methods often fail to effectively capture the intrinsic product structures and underlying characteristics of such data. Moreover, single-window approaches lack the flexibility required for accurate localized analysis.
To overcome these limitations, this paper proposes a novel framework termed the two-dimensional multi-window graph fractional Fourier transform (2D-MWGFRFT), specifically designed for signals defined on two-dimensional Cartesian product graphs. Within this framework, we introduce a class of two-dimensional multi-window graph fractional Fourier atoms and establish a rigorous frame-theoretic foundation, ensuring numerical stability and enabling perfect reconstruction. Furthermore, to enhance computational efficiency, we develop a fast implementation that exploits the separable structure of the underlying product graph. 
Experimental results demonstrate that the proposed 2D-MWGFRFT significantly improves representation capability and analytical flexibility compared to conventional single-window and one-dimensional approaches, thereby providing a powerful and robust tool for localized fractional vertex–frequency analysis of multi-dimensional graph signals.
\end{abstract}

\begin{keyword}
	Multi-window \sep Multi-dimensional \sep Graph fractional Fourier domain \sep Vertex-frequency analysis \sep Frame 
\end{keyword}

\end{frontmatter}

\section{Introduction}\label{section1}

In modern data-driven applications, a large volume of real-world data naturally resides on irregular and interconnected domains that can be effectively modeled as graphs. Representative examples include social networks, sensor networks, transportation systems, biological interaction networks, and brain connectivity structures \cite{ref1,ref2,ref3,ref4,ref5,ref6}. To process such non-Euclidean data, Graph Signal Processing (GSP) \cite{ref7,ref8,ref9} has emerged as a powerful framework that generalizes classical signal processing tools---such as filtering, sampling, and transforms---to graph-structured domains \cite{ref10,ref11,ref12,ref13,ref14,ref15}. By leveraging spectral graph theory, GSP enables efficient analysis and interpretation of signals defined over complex and irregular structures.

Among the core tools in GSP, the Graph Fourier Transform (GFT) plays a fundamental role by generalizing the notion of frequency \cite{ref11,ref16} to graph domains via the eigendecomposition of the graph Laplacian. Building upon this foundation, fractional extensions \cite{ref17,ref18} such as the Graph Fractional Fourier Transform (GFRFT) and the Spectral Graph Fractional Fourier Transform (SGFRFT) have been developed to provide additional flexibility for signal analysis \cite{ref19,ref20}. These transforms enable a continuous transition between the vertex domain and the spectral domain, thereby offering enhanced capability for multi-scale and intermediate-domain signal representation.

In classical signal processing, windowed transforms have proven highly effective for capturing localized signal characteristics via joint time--frequency analysis \cite{ref25,ref26,ref27,ref28,ref29,ref30}. Inspired by this idea, windowed extensions have been developed in the graph setting \cite{ref31,ref33,ref34,Yan2021dsp}, among which the windowed graph fractional Fourier transform (WGFRFT) \cite{Yan2021dsp} enables localized vertex--frequency analysis. However, the use of a single window inherently imposes a trade-off between spatial and spectral resolution, limiting its adaptability to signals with complex and heterogeneous structures. To alleviate this limitation, multi-window strategies have recently attracted increasing attention \cite{zheng2016multi,Zheng2021cc,bulai2026ECarXiv}, leading to the development of the multi-window graph fractional Fourier transform (MWGFRFT) \cite{SHANG2026110198}, which improves flexibility and representation capability.

Despite these advances, most existing graph fractional transforms are primarily designed for one-dimensional graph signals and are not well suited for multi-dimensional graph data. In particular, for signals defined on Cartesian product graphs \cite{ref21,ref22,ref23,ref24}, conventional approaches typically collapse multi-dimensional spectral representations into a single frequency axis. Such simplifications inevitably lead to the loss of intrinsic structural and directional information, thereby limiting their descriptive power. Although several multidimensional extensions have been proposed, such as the two-dimensional spectral graph fractional Fourier transform (2D-SGFRFT) \cite{ref35} and the two-dimensional windowed graph fractional Fourier transform (2D-WGFRFT) \cite{GAN2025105191}, their capability to capture localized vertex--frequency characteristics remains insufficient.

To address the above challenges, this paper proposes a novel two-dimensional multi-window graph fractional Fourier transform (2D-MWGFRFT) defined on Cartesian product graphs. As illustrated in Fig.~\ref{2dFig1}, the proposed framework unifies and extends existing methods, including WGFRFT \cite{Yan2021dsp}, MWGFRFT \cite{SHANG2026110198}, and 2D-WGFRFT \cite{GAN2025105191}. By incorporating multi-window mechanisms into the fractional graph spectral framework, the proposed transform enables flexible multi-resolution analysis and significantly enhances vertex--frequency localization. Furthermore, we establish the associated two-dimensional multi-window graph fractional Fourier frame, which provides a rigorous theoretical foundation for stable signal representation and perfect reconstruction.
\begin{figure}[htbp]
	\centering
	\includegraphics[width=0.7\columnwidth]{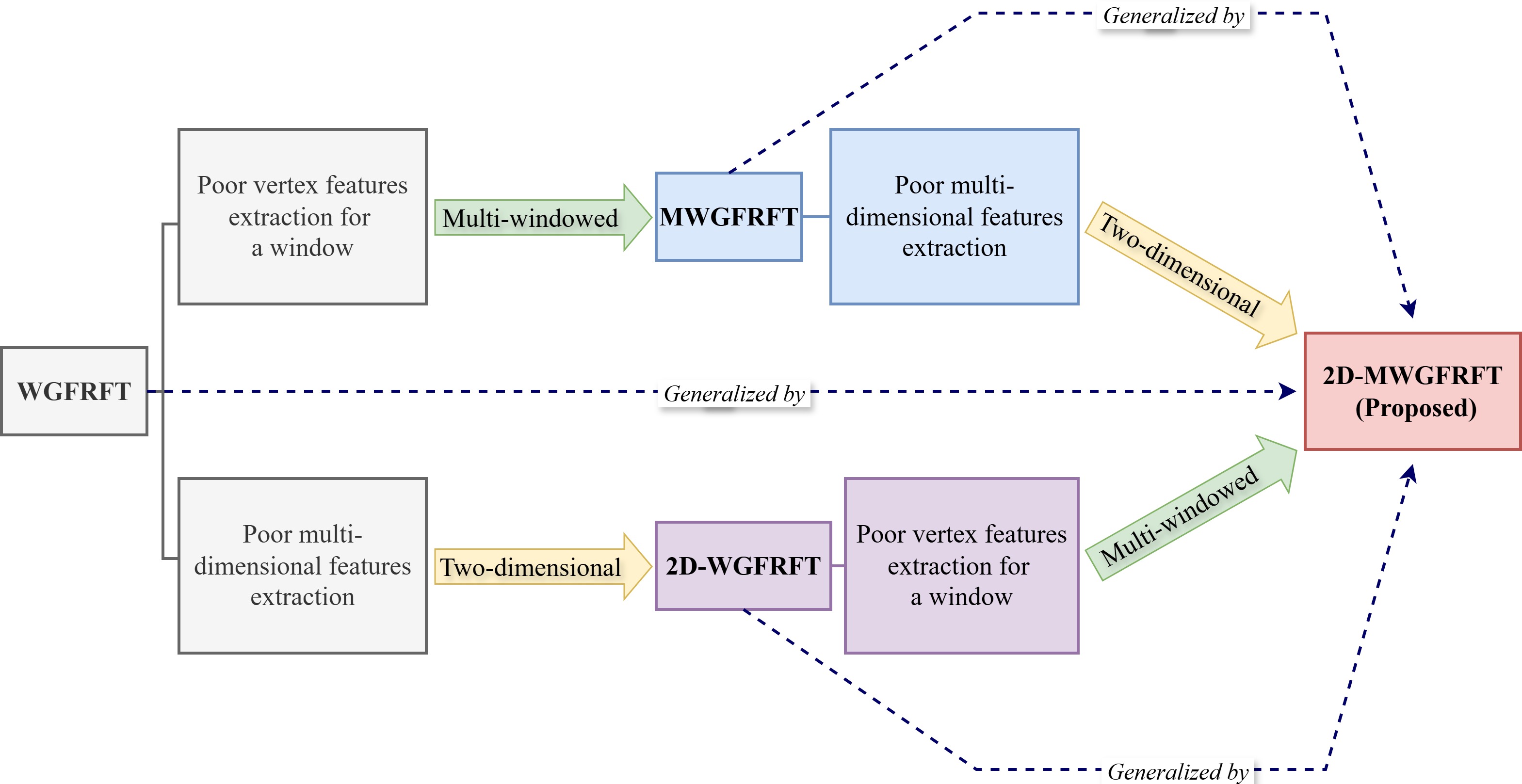}
	\caption{Relationship between WGFRFT, MWGFRFT, 2D-WGFRFT and 2D-MWGFRFT.}
	\label{2dFig1}
\end{figure}

To address computational challenges, we further develop a fast implementation, referred to as F2D-MWGFRFT. By exploiting the separable structure of Cartesian product graphs and reformulating the transform in the spectral domain, the proposed algorithm achieves a substantial reduction in computational complexity, making it suitable for large-scale graph signal processing applications.

The main contributions of this work are summarized as follows:
\begin{itemize}
	\item We propose a novel 2D-MWGFRFT framework that integrates multi-window analysis with graph fractional transforms, enabling enhanced vertex-frequency representation for signals on Cartesian product graphs.
	
	\item We construct the corresponding two-dimensional multi-window graph fractional Fourier frame and derive the forward and inverse transforms, providing a solid theoretical foundation for stable analysis and reconstruction.
	
	\item We develop an efficient fast algorithm (F2D-MWGFRFT), reducing the computational complexity from $O\big(L(N_1N_2)^4\big)$ to $O\big(L(N_1N_2)^3\big)$.
	
	\item We validate the effectiveness of the proposed method through extensive experiments, demonstrating superior performance in terms of spectral concentration, localization capability, and computational efficiency compared with existing approaches.
\end{itemize}

The remainder of this paper is organized as follows. Section~2 reviews the necessary preliminaries, including spectral graph theory, fractional transforms, and Cartesian product graphs. Section~3 introduces the proposed 2D-MWGFRFT and its fast implementation. Section~4 presents experimental results and performance evaluations. Section~5 demonstrates the application of the proposed method to anomaly detection. Finally, Section~6 concludes the paper.

\section{Preliminaries}\label{section2}

\subsection{Spectral graph theory}

Consider an undirected weighted graph $\mathcal{G}=\{\mathcal{V},\mathcal{E},W\}$, where $\mathcal{V}$ denotes a finite set of $N$ vertices, $\mathcal{E}$ represents the set of edges, and $W \in \mathbb{R}^{N \times N}$ is the weighted adjacency matrix. The non-normalized graph Laplacian, serving as a symmetric difference operator, is defined as $\mathcal{L}=D-W$ \cite{Shuman2013SPM}, where $D$ is the diagonal degree matrix with entries $D_{ii} = \sum_{j} W_{ij}$. The spectral decomposition of $\mathcal{L}$ is given by $\mathcal{L}=\chi\Lambda\chi^H$, where $\Lambda = \operatorname{diag}(\lambda_0, \lambda_1, \dots, \lambda_{N-1})$ represents the diagonal matrix of eigenvalues sorted as $0 = \lambda_0 < \lambda_1 \leq \dots \leq \lambda_{N-1}$, and $\chi = [\chi_0, \chi_1, \dots, \chi_{N-1}]$ is the unitary matrix containing the corresponding orthonormal eigenvectors.

For a graph signal $\mathbf{f} \in \mathbb{C}^N$, the graph Fourier transform (GFT) and its inverse (IGFT) are defined as:
\begin{equation*}
	\widehat{f}(\lambda_k) = \sum_{n=1}^{N}f(n)\chi_{k}^{*}(n), \quad f(n) = \sum_{k=0}^{N-1}\widehat{f}(\lambda_k)\chi_k(n).
\end{equation*}

To generalize this framework, the spectral graph fractional Fourier transform (SGFRFT) employs the fractional Laplacian operator $\mathcal{L}^{(\alpha)} = \gamma R \gamma^H$ for a fractional order $0 < \alpha \leq 1$. Here, $R = \Lambda^\alpha = \operatorname{diag}(r_0, r_1, \dots, r_{N-1})$ denotes the fractional eigenvalues, and $\gamma = \chi^\alpha$ represents the fractional unitary basis (often interpreted as a generalized basis derived from the fractional operator). The SGFRFT of a signal $\mathbf{f}$ and its inverse are expressed as:
\begin{equation*}
	\widehat{f}^{\alpha}(r_k) = \sum_{n=1}^{N}f(n)\gamma_{k}^{*}(n), \quad f(n) = \sum_{k=0}^{N-1}\widehat{f}^\alpha(r_k)\gamma_k(n).
\end{equation*}
Note that the SGFRFT reduces to the standard GFT when $\alpha=1$. Furthermore, the SGFRFT preserves the inner product, satisfying the Parseval relation: $\langle \mathbf{f}, \mathbf{g} \rangle = \langle \widehat{\mathbf{f}}^{\alpha}, \widehat{\mathbf{g}}^{\alpha} \rangle$.

\subsection{Windowed graph fractional Fourier transform and multi-window graph fractional Fourier transform}

\begin{definition}(Graph fractional translation operator)
	For any signal $\mathbf{g}$ defined on graph $\mathcal{G}$, the graph fractional translation operator $T_i^{\alpha}$ at vertex $i \in \{1, 2, \dots, N\}$ is defined as:
	\begin{equation}\label{eq1}
		(T_i^\alpha \mathbf{g})(n) = N^{\alpha/2} \sum_{p=0}^{N-1} \widehat{\mathbf{g}}^{\alpha}(r_p) \gamma_{p}^*(i) \gamma_{p}(n).
	\end{equation}
\end{definition}

\begin{definition}(Graph fractional modulation operator)
	For any signal $\mathbf{g}$ defined on graph $\mathcal{G}$, the graph fractional modulation operator at frequency index $k \in \{0, 1, \dots, N-1\}$ is defined as:
	\begin{equation*}
		(M_k^\alpha \mathbf{g})(n) = N^{\alpha/2} \mathbf{g}(n) \gamma_k(n).
	\end{equation*}
\end{definition}

Let $\{\mathbf{g}_l\}_{l=1}^L \subset \mathbb{C}^N$ be a finite sequence of $L$ window functions, where $\widehat{\mathbf{g}_l}^\alpha$ denotes the SGFRFT of $\mathbf{g}_l$. We construct a set of multi-window graph fractional Fourier atoms as:
\begin{equation}\label{eqma1}
	\mathcal{G}^{l\alpha} = \{ \mathbf{g}_{i,k}^{(l\alpha)} \mid i=1, \dots, N; \, k=0, \dots, N-1; \, l=1, \dots, L; \, 0 < \alpha \leq 1 \},
\end{equation}
where each atom is generated by the successive application of the modulation and translation operators:
\begin{equation*}
	\mathbf{g}_{i,k}^{(l\alpha)}(n) = (M_k^\alpha T_i^\alpha \mathbf{g}_l)(n) = N^{\alpha} \gamma_{k}(n) \sum_{p=0}^{N-1} \widehat{\mathbf{g}_l}^\alpha(r_p) \gamma_{p}^*(i) \gamma_{p}(n).
\end{equation*}

\begin{definition}\label{mwgfft1}
	(Multi-window graph fractional Fourier transform)
	Given a set of window functions $\{\mathbf{g}_l\}_{l=1}^L$, the multi-window graph fractional Fourier transform (MWGFRFT) of a signal $\mathbf{f} \in \mathbb{C}^N$, denoted by $Wf(\mathbf{g}_{i,k}^{(l\alpha)}) := \langle \mathbf{f}, \mathbf{g}_{i,k}^{(l\alpha)} \rangle$, is defined as:
	\begin{equation}\label{eqMW1}
		Wf(\mathbf{g}_{i,k}^{(l\alpha)}) = N^{\alpha} \sum_{l=1}^{L}\sum_{n=1}^{N} f(n) \left( \sum_{p=0}^{N-1} \widehat{\mathbf{g}_l}^{\alpha*}(r_p) \gamma_{p}(i) \gamma_{p}^{*}(n) \right) \gamma_{k}^{*}(n).
	\end{equation}
\end{definition}

It is worth noting that for $L=1$, Eq.~\eqref{eqMW1} simplifies to the standard windowed graph fractional Fourier transform (WGFRFT) as discussed in \cite{Yan2021dsp}.

\subsection{Cartesian product graph}\label{203}

Consider two weighted graphs $\mathcal{G}_1 = \{\mathcal{V}_1, \mathcal{E}_1, {W}_1\}$ and $\mathcal{G}_2 = \{\mathcal{V}_2, \mathcal{E}_2, {W}_2\}$, with Laplacian matrices denoted by $\mathcal{L}_1$ and $\mathcal{L}_2$, respectively. The vertex sets are defined as $\mathcal{V}_1 = \{1, \dots, N_1\}$ and $\mathcal{V}_2 = \{1, \dots, N_2\}$. 
The Cartesian product graph, denoted by $\mathcal{G} = \mathcal{G}_1 \square \mathcal{G}_2$, is constructed over the product vertex set $\mathcal{V}_1 \times \mathcal{V}_2$. Two vertices $(i_1, i_2)$ and $(j_1, j_2)$ in $\mathcal{G}$ are adjacent if and only if they satisfy one of the following conditions:
\begin{equation*}
	[\{i_1, j_1\} \in \mathcal{E}_1 \text{ and } i_2 = j_2] \quad \text{or} \quad [i_1 = j_1 \text{ and } \{i_2, j_2\} \in \mathcal{E}_2].
\end{equation*}
Accordingly, the weight function ${W}_\square$ for the product graph is defined as:
\begin{equation*}
	W_\square((i_1, i_2), (j_1, j_2)) = W_1(i_1, j_1)\delta(i_2, j_2) + \delta(i_1, j_1)W_2(i_2, j_2),
\end{equation*}
where $\delta(i, j)$ represents the Kronecker delta function ($\delta=1$ if $i=j$, and $0$ otherwise). Based on this construction, the Cartesian product operation satisfies the following algebraic properties:
\begin{itemize}
	\item \textbf{Associativity:} $(\mathcal{F} \square \mathcal{G}) \square \mathcal{H} \cong \mathcal{F} \square (\mathcal{G} \square \mathcal{H})$.
	\item \textbf{Commutativity:} $\mathcal{G} \square \mathcal{H} \cong \mathcal{H} \square \mathcal{G}$.
\end{itemize}

The graphs $\mathcal{G}_1$ and $\mathcal{G}_2$ are termed the factor graphs of $\mathcal{G}_1 \square \mathcal{G}_2$. The adjacency matrix ${W}$ and the Laplacian matrix $\mathcal{L}$ of the resulting product graph can be expressed using the Kronecker sum ($\oplus$) \cite{ref36} and Kronecker product ($\otimes$) of the matrices of its factor graphs:
\begin{equation}
	{W} = {W}_1 \oplus {W}_2 = {W}_1 \otimes \mathbf{I}_{N_2} + \mathbf{I}_{N_1} \otimes {W}_2,
\end{equation}
\begin{equation}
	\mathcal{L} = \mathcal{L}_1 \oplus \mathcal{L}_2 = \mathcal{L}_1 \otimes \mathbf{I}_{N_2} + \mathbf{I}_{N_1} \otimes \mathcal{L}_2,
\end{equation}
where $\mathbf{I}_n$ denotes an $n \times n$ identity matrix. For any matrices ${A} \in \mathbb{R}^{m \times n}$ and ${B} \in \mathbb{R}^{p \times q}$, the Kronecker product ${A} \otimes {B}$ yields a $pm \times qn$ block matrix:
\begin{equation}
	{A} \otimes {B} = \begin{pmatrix} a_{11}{B} & \cdots & a_{1n}{B} \\ \vdots & \ddots & \vdots \\ a_{m1}{B} & \cdots & a_{mn}{B} \end{pmatrix}.
\end{equation}

\subsection{Two-dimensional spectral graph fractional Fourier transform}

In light of the Cartesian product graph $\mathcal{G}_1 \square \mathcal{G}_2$ introduced in Section \ref{203}, let the Laplacian operators of the factor graphs be defined by their eigendecompositions: $\mathcal{L}_1 = \chi_1 \Lambda_1 \chi_1^H$ and $\mathcal{L}_2 = \chi_2 \Lambda_2 \chi_2^H$. The corresponding graph fractional Laplacian operators for $\mathcal{G}_1$ and $\mathcal{G}_2$ are then defined as:
\begin{equation*}
\mathcal{L}_j^{(\alpha)} = \gamma^{(j)} R^{(j)} (\gamma^{(j)})^H,
\end{equation*}
where$\gamma^{(j)} = [\gamma_0^{(j)}, \gamma_1^{(j)}, \dots, \gamma_{N_j-1}^{(j)}] = \chi_j^{\alpha}$ denotes the fractional eigenvector matrix and $R^{(j)} = \mathrm{diag}([r_0^{(j)}, r_1^{(j)}, \dots, r_{N_j-1}^{(j)}]) = \Lambda_j^{\alpha}$ represents the fractional eigenvalue matrix for $j=1,2$.

For the Cartesian product graph $\mathcal{G}_1 \square \mathcal{G}_2$, the eigendecomposition of the Kronecker sum $\mathcal{L}_1^{(\alpha)} \oplus \mathcal{L}_2^{(\alpha)}$ in the graph fractional domain is given by:
\begin{equation*}
	(\mathcal{L}_1^{(\alpha)} \oplus \mathcal{L}_2^{(\alpha)} )\left( \gamma_{k_{1}}^{(1)} \otimes \gamma_{k_{2}}^{(2)} \right) = (r_{k_{1}}^{(1)} + r_{k_{2}}^{(2)}) \left( \gamma_{k_{1}}^{(1)} \otimes \gamma_{k_{2}}^{(2)} \right),
\end{equation*}
where the joint eigenvector is the Kronecker product of the individual fractional eigenvectors, with its elements expressed as $\gamma_{k_{1}}^{(1)}(n_1) \gamma_{k_{2}}^{(2)}(n_2)$.

The two-dimensional spectral graph fractional Fourier transform (2D-SGFRFT) of a two-dimensional signal $\mathbf{f}: \mathcal{V}_1 \times \mathcal{V}_2 \to \mathbb{C}$ on $\mathcal{G}_1 \square \mathcal{G}_2$ is defined as:
\begin{equation*}
	\widehat{f}^{\alpha}(r_{k_{1}}, r_{k_{2}}) = \sum_{n_{1}=1}^{N_{1}} \sum_{n_{2}=1}^{N_{2}} f(n_{1}, n_{2}) (\gamma_{k_{1}}^{(1)}(n_{1}))^* (\gamma_{k_{2}}^{(2)}(n_{2}))^*,
\end{equation*}
for $k_{1} = 0, \dots, N_{1}-1$ and $k_{2} = 0, \dots, N_{2}-1$. The corresponding inverse transform is given by:
\begin{equation*}
	f(n_1, n_2) = \sum_{k_1=0}^{N_1-1} \sum_{k_2=0}^{N_2-1} \widehat{f}^{\alpha}(r_{k_{1}}, r_{k_{2}}) \gamma_{k_1}^{(1)}(n_1) \gamma_{k_2}^{(2)}(n_2).
\end{equation*}

\subsection{Two-dimensional windowed graph fractional Fourier transform}

The graph fractional Laplacian operator of $\mathcal{G}_{\square}=\mathcal{G}_1 \square \mathcal{G}_2$ is formulated as
$$\mathcal{L}^{(\alpha)} = \mathcal{L}_1^{(\alpha)} \oplus \mathcal{L}_2^{(\alpha)} = \gamma^{prod} R^{prod}(\gamma^{prod})^H.$$
which satisfies the eigenvalue equation:
$$
\mathcal{L}^{(\alpha)} 
\begin{pmatrix} \gamma_{k_1}^{(1)}(1) \gamma_{k_2}^{(2)}(1) \\ \gamma_{k_1}^{(1)}(1) \gamma_{k_2}^{(2)}(2) \\ \vdots \\ \gamma_{k_1}^{(1)}(N_1) \gamma_{k_2}^{(2)}(N_2) \end{pmatrix}
= (r_{k_1}^{(1)} + r_{k_2}^{(2)})
\begin{pmatrix} \gamma_{k_1}^{(1)}(1) \gamma_{k_2}^{(2)}(1) \\ \gamma_{k_1}^{(1)}(1) \gamma_{k_2}^{(2)}(2) \\ \vdots \\ \gamma_{k_1}^{(1)}(N_1) \gamma_{k_2}^{(2)}(N_2) \end{pmatrix}.
$$
For any two-dimensional signals ${f}, {g}: \mathcal{V}_1 \times \mathcal{V}_2 \to \mathbb{C}$ on $\mathcal{G}_1 \square \mathcal{G}_2$, the two-dimensional graph fractional convolution operator $*_\alpha$ is defined as:
$$({f} *_\alpha {g})(n_1, n_2) = \sum_{k_1=0}^{N_1-1} \sum_{k_2=0}^{N_2-1} \widehat{f}^\alpha(k_1, k_2) \widehat{g}^\alpha(k_1, k_2) \gamma_{k_1}^{(1)}(n_1) \gamma_{k_2}^{(2)}(n_2).$$

Equipped with this convolution, for any signal ${f}\in\mathbb{C}^{N_1 \times N_2}$ and any vertex $(i_1, i_2)$, we define the two-dimensional graph fractional translation operator $T_{i_1,i_2}^{\alpha}:\mathbb{C}^{N_1\times N_2}\to\mathbb{C}^{N_1\times N_2}$ via a generalized convolution with a delta function $\delta_{i_1,i_2}$ centered at vertex $(n_1, n_2)$:
$$(T_{i_1,i_2}^{\alpha}f)\left(n_1,n_2\right)
:=(N_1N_2)^{\alpha/2}(f*_\alpha\delta_{i_1,i_2})(n_1,n_2).$$

\begin{definition}\label{def2dmwt2}
	(Two-dimensional graph fractional translation operator) 
	For any signal $\mathbf{g}\in\mathbb{C}^{N_1 \times N_2}$ and vertex $(i_1, i_2)$ on $\mathcal{G}_1 \square \mathcal{G}_2$, the two-dimensional graph fractional translation operator $T_{i_1,i_2}^\alpha$ is formally expressed as:
	\begin{equation}
		(T_{i_1,i_2}^\alpha \mathbf{g})(n_1,n_2) = (N_1N_2)^{\alpha/2} \sum_{p_1=0}^{N_1-1}\sum_{p_2=0}^{N_2-1} \widehat{\mathbf{g}}^{\alpha}(r_{p_1}^{(1)}, r_{p_2}^{(2)}) (\gamma_{p_1}^{(1)}(i_1) \gamma_{p_2}^{(2)}(i_2))^* \gamma_{p_1}^{(1)}(n_1)\gamma_{p_2}^{(2)}(n_2).
	\end{equation}
\end{definition}

\begin{definition}
	(Two-dimensional graph fractional modulation operator)
	For a signal $\mathbf{g}\in\mathbb{C}^{N_1 \times N_2}$ on $\mathcal{G}_1 \square \mathcal{G}_2$ and any fractional frequency indices $k_1 \in \{0, \dots, N_1-1\}$, $k_2 \in \{0, \dots, N_2-1\}$, the two-dimensional graph fractional modulation operator is defined as:
	\begin{equation*}
		(M_{k_1,k_2}^\alpha \mathbf{g})(n_1,n_2) = (N_1N_2)^{\alpha/2} \mathbf{g}(n_1,n_2)\gamma_{k_1}^{(1)}(n_1)\gamma_{k_2}^{(2)}(n_2).
	\end{equation*}
\end{definition}

\begin{definition}
	(Two-dimensional windowed graph fractional Fourier transform)
	Given a window function $\mathbf{g}: \mathcal{V}_1 \times \mathcal{V}_2 \to \mathbb{C}$ and its 2D-SGFRFT $\widehat{\mathbf{g}}^{\alpha}$, the two-dimensional windowed graph fractional Fourier transform (2D-WGFRFT) of a signal $\mathbf{f}$ on $\mathcal{G}_1 \square \mathcal{G}_2$ is defined by:
	\begin{equation*}
		\begin{aligned}
			W\mathbf{f}(i_1,i_2,k_1,k_2) = & (N_1N_2)^\alpha \sum_{n_1=1}^{N_1}\sum_{n_2=1}^{N_2}\mathbf{f}(n_1,n_2) (\gamma_{k_1}^{(1)}(n_1)\gamma_{k_2}^{(2)}(n_2))^* \\
			& \times \left(\sum_{p_1=0}^{N_1-1}\sum_{p_2=0}^{N_2-1} \widehat{\mathbf{g}}^{\alpha*}(r_{p_1}^{(1)}, r_{p_2}^{(2)}) \gamma_{p_1}^{(1)}(i_1)\gamma_{p_2}^{(2)}(i_2) (\gamma_{p_1}^{(1)}(n_1)\gamma_{p_2}^{(2)}(n_2))^* \right).
		\end{aligned}
	\end{equation*}
\end{definition}

\section{The two-dimensional multi-window graph fractional Fourier transform}
\label{section3}

In this section, we formulate the two-dimensional multi-window graph fractional Fourier frames and establish the associated theoretical framework.

\subsection{Two-dimensional multi-window graph fractional Fourier transform}

Consider a finite sequence of window functions $g_1, g_2, \ldots, g_L : \mathcal{V}_1 \times \mathcal{V}_2 \to \mathbb{C}$, where the 2D-SGFRFT of each $g_l$ is denoted by $\widehat{g}_l^\alpha$. We define the set of two-dimensional multi-windowed graph fractional Fourier atoms as:
\begin{equation}\label{eq2dmw03}
	\mathcal{G}^{l\alpha} = \big\{
	g_{i_1,i_2;k_1,k_2}^{(l\alpha)}
	\mid
	i_1=1,\ldots,N_1; i_2=1,\ldots,N_2;
	k_1=0,\ldots,N_1-1; k_2=0,\ldots,N_2-1; l=1,\ldots,L; 0<\alpha \leq 1
	\big\},\end{equation}
where the atoms are constructed as:
\begin{equation*}
	\begin{aligned}
		g_{i_1,i_2;k_1,k_2}^{(l\alpha)}(n_1,n_2) 
		&=(M_{k_1,k_2}^\alpha T_{i_1,i_2}^\alpha g_l)(n_1,n_2) \\
		&=(N_1N_2)^\alpha
		\gamma_{k_1}^{(1)}(n_1)
		\gamma_{k_2}^{(2)}(n_2)
		\left(\sum\limits_{p_1=0}^{N_1-1}\sum\limits_{p_2=0}^{N_2-1}
		\widehat{g}_l^{\alpha}\left(r_{p_1}^{(1)}+r_{p_2}^{(2)}\right)
		\overline{\gamma_{p_1}^{(1)}(i_1)\gamma_{p_2}^{(2)}(i_2)}
		\gamma_{p_1}^{(1)}(n_1)\gamma_{p_2}^{(2)}(n_2) \right).
	\end{aligned}
\end{equation*}

\begin{theorem}\label{th2d02}
	Let $\mathcal{G}^{l\alpha}$ be the set of two-dimensional multi-window graph fractional Fourier atoms defined in \eqref{eq2dmw03}. If $\sum\limits_{l=1}^L |\widehat{g}_l^\alpha(0,0)|^2 \neq 0$, then $\mathcal{G}^{l\alpha}$ constitutes a frame (i.e., a 2D-MWGFRFF). Consequently, for any signal $f \in \mathbb{C}^{N_1 \times N_2}$, the following frame condition holds:
	\begin{equation*}
		0 < A \|f\|_2^2 
		\leq \sum_{l=1}^L 
		\sum_{i_1=1}^{N_1} \sum_{i_2=1}^{N_2}
		\sum_{k_1=0}^{N_1-1} \sum_{k_2=0}^{N_2-1} 
		|\langle f, g_{i_1,i_2;k_1,k_2}^{(l\alpha)} \rangle|^2 
		\leq B \|f\|_2^2 
		< \infty,
	\end{equation*}
	where the lower frame bound $A$ and upper frame bound $B$ are given by:
	\begin{equation}\label{eq2dmw05}
		0 < A = \min_{\substack{n_1 \in \{1,\ldots,N_1\}\\n_2 \in \{1,\ldots,N_2\}}} 
		\left\{(N_{1}N_{2})^{\alpha} \sum_{l=1}^L \|T_{n_1,n_2}^{\alpha} g_l\|_2^2\right\},
	\end{equation}
	\begin{equation}\label{eq2dmw06}
		B = \max_{\substack{n_1 \in \{1,\ldots,N_1\}\\n_2 \in \{1,\ldots,N_2\}}} 
		\left\{(N_{1}N_{2})^{\alpha} \sum_{l=1}^L \|T_{n_1,n_2}^{\alpha} g_l\|_2^2\right\} < \infty.
	\end{equation}
\end{theorem}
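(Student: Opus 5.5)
The plan is to compute the frame sum in closed form by first summing over the modulation indices $(k_1,k_2)$ with Parseval, and then reading off $A$ and $B$ as pointwise bounds.

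\textbf{Step 1 (the atoms as products).} Write $\Gamma_{k_1,k_2}(n_1,n_2)=\gamma^{(1)}_{k_1}(n_1)\gamma^{(2)}_{k_2}(n_2)$. The atom is
\[
g^{(l\alpha)}_{i_1,i_2;k_1,k_2}=(N_1N_2)^{\alpha/2}\,\Gamma_{k_1,k_2}\cdot T^\alpha_{i_1,i_2}g_l ,
\]
a pointwise product. Hence
\[
\langle f,g^{(l\alpha)}_{i_1,i_2;k_1,k_2}\rangle=(N_1N_2)^{\alpha/2}\sum_{n_1,n_2} f(n_1,n_2)\,\overline{T^\alpha_{i_1,i_2}g_l(n_1,n_2)}\;\overline{\Gamma_{k_1,k_2}(n_1,n_2)} .
\]
This equals $(N_1N_2)^{\alpha/2}$ times the 2D-SGFRFT of the function $f\cdot\overline{T^\alpha_{i_1,i_2}g_l}$, evaluated at $(k_1,k_2)$.

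\textbf{Step 2 (sum over $k_1,k_2$).} The functions $\{\Gamma_{k_1,k_2}\}$ form an orthonormal basis of $\mathbb{C}^{N_1\times N_2}$, since $\gamma^{(1)}\otimes\gamma^{(2)}$ is unitary. Parseval for the 2D-SGFRFT therefore gives
\[
\sum_{k_1,k_2}\bigl|\langle f,g^{(l\alpha)}_{i_1,i_2;k_1,k_2}\rangle\bigr|^2=(N_1N_2)^{\alpha}\sum_{n_1,n_2}|f(n_1,n_2)|^2\,\bigl|T^\alpha_{i_1,i_2}g_l(n_1,n_2)\bigr|^2 .
\]

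\textbf{Step 3 (sum over $i_1,i_2$ via a symmetry).} From Definition~\ref{def2dmwt2},
\[
T^\alpha_{i_1,i_2}g_l(n_1,n_2)=(N_1N_2)^{\alpha/2}\sum_{p_1,p_2}\widehat{g}_l^\alpha(p_1,p_2)\,\overline{\Gamma_{p_1,p_2}(i_1,i_2)}\,\Gamma_{p_1,p_2}(n_1,n_2).
\]
Thus $|T^\alpha_{i_1,i_2}g_l(n_1,n_2)|$ is symmetric under swapping $(i_1,i_2)\leftrightarrow(n_1,n_2)$, up to conjugating the $\Gamma$'s. The cleanest route is to apply Parseval in the $(i_1,i_2)$ variable: the map $(i_1,i_2)\mapsto T^\alpha_{i_1,i_2}g_l(n_1,n_2)$ has spectral coefficients $(N_1N_2)^{\alpha/2}\widehat{g}_l^\alpha(p_1,p_2)\Gamma_{p_1,p_2}(n_1,n_2)$ with respect to the conjugate basis $\overline{\Gamma}$, which is also orthonormal. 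Summing over $(i_1,i_2)$ gives the same value as $\sum_{m_1,m_2}|T^\alpha_{n_1,n_2}g_l(m_1,m_2)|^2=\|T^\alpha_{n_1,n_2}g_l\|_2^2$. Swapping the order of summation then yields the exact identity
\[
\sum_{l}\sum_{i,k}\bigl|\langle f,g^{(l\alpha)}_{i_1,i_2;k_1,k_2}\rangle\bigr|^2=\sum_{n_1,n_2}|f(n_1,n_2)|^2\,(N_1N_2)^\alpha\sum_{l=1}^L\|T^\alpha_{n_1,n_2}g_l\|_2^2 .
\]

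\textbf{Step 4 (the bounds).} Bounding the weight $(N_1N_2)^\alpha\sum_l\|T^\alpha_{n_1,n_2}g_l\|_2^2$ above and below by its max and min gives exactly $A\|f\|_2^2\le\cdot\le B\|f\|_2^2$. Finiteness of $B$ is trivial in finite dimensions. It remains to show $A>0$, which I expect to be the only genuine obstacle. For this I will use
\[
\|T^\alpha_{n_1,n_2}g_l\|_2^2=(N_1N_2)^\alpha\sum_{p_1,p_2}|\widehat{g}_l^\alpha(p_1,p_2)|^2\,|\Gamma_{p_1,p_2}(n_1,n_2)|^2\;\ge\;(N_1N_2)^\alpha|\widehat{g}_l^\alpha(0,0)|^2\,|\Gamma_{0,0}(n_1,n_2)|^2 .
\]
I then need $\Gamma_{0,0}(n_1,n_2)=\gamma^{(1)}_0(n_1)\gamma^{(2)}_0(n_2)$ to be nonvanishing. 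This is where care is needed. The zero-eigenvalue eigenvector of a connected graph Laplacian is the constant $1/\sqrt{N_j}$. One has to argue that the fractional basis $\gamma^{(j)}=\chi_j^\alpha$ keeps a nowhere-vanishing first column, for instance by taking the fractional power so that the constant eigenvector is preserved, as in the 1D MWGFRFT frame result of \cite{SHANG2026110198}, whose argument I would mirror. Granting that $|\Gamma_{0,0}|^2=1/(N_1N_2)$, we get $A\ge (N_1N_2)^{2\alpha-1}\sum_l|\widehat{g}_l^\alpha(0,0)|^2>0$ under the hypothesis, and the frame property follows.
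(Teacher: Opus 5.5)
Your argument is the paper's proof (Parseval in $(k_1,k_2)$, resumming over $(i_1,i_2)$ to obtain $\|T^\alpha_{n_1,n_2}g_l\|_2^2$, then min/max of the weight and the $(0,0)$ spectral term for $A>0$), and your Step~3 is cleaner than the paper's: the paper swaps $|T^\alpha_{i_1,i_2}g_l(n_1,n_2)|$ and $|T^\alpha_{n_1,n_2}g_l(i_1,i_2)|$ pointwise, which can fail when $\gamma^{(j)}$ and $\widehat g_l^\alpha$ are both complex, whereas only the summed identity you derive by Parseval in $(i_1,i_2)$ is needed. The nonvanishing of $\gamma^{(1)}_0(n_1)\gamma^{(2)}_0(n_2)$ that you flag is also tacitly assumed in the paper, but you should assume only nonvanishing and not $|\Gamma_{0,0}|^2=1/(N_1N_2)$, because for $\gamma^{(j)}=\chi_j^\alpha$ the first column need not have constant modulus (it already fails for the two-vertex path with $\alpha=1/2$); the conclusion is then $A\ge (N_1N_2)^{2\alpha}\min_{n_1,n_2}|\Gamma_{0,0}(n_1,n_2)|^2\sum_{l}|\widehat g_l^\alpha(0,0)|^2>0$.
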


\begin{proof}
	See \ref{2datheorem01}.
\end{proof}

\begin{definition}\label{def2dmw1}
	(Two-dimensional multi-window graph fractional Fourier transform)
	Given a finite sequence of window functions $\{g_l\}_{l=1}^L : \mathcal{V}_1 \times \mathcal{V}_2 \to \mathbb{C}$, the 2D-SGFRFT of $g_l$ is $\widehat{g}_l^\alpha$.
	The two-dimensional multi-window graph fractional Fourier transform
	$Wf(g_{i_1,i_2;k_1,k_2}^{(l\alpha)})
	:=\langle{f},g_{i_1,i_2;k_1,k_2}^{(l\alpha)}(n_1,n_2)\rangle$
	of two-dimensional signal ${f}$ on graph $\mathcal{G}_1 \square \mathcal{G}_2$ is defined by
	\begin{equation}
		\begin{aligned}\label{eq2dmw01}
			Wf(g_{i_1,i_2;k_1,k_2}^{(l\alpha)})
			=(N_1N_2)^\alpha
			\sum_{l=1}^{L}
			\sum_{n_1=1}^{N_1}\sum_{n_2=1}^{N_2}f(n_1,n_2)
			\overline{\gamma_{k_1}^{(1)}(n_1)\gamma_{k_2}^{(2)}(n_2)}
			\left(\sum_{p_1=0}^{N_1-1}\sum_{p_2=0}^{N_2-1}
			\widehat{\mathbf{g}}_{l}^{\alpha*}\left(r_{p_1}^{(1)}+r_{p_2}^{(2)}\right)
			\gamma_{p_1}^{(1)}(i_1)\gamma_{p_2}^{(2)}(i_2)
			\overline{\gamma_{p_1}^{(1)}(n_1)\gamma_{p_2}^{(2)}(n_2)} \right).
		\end{aligned}
	\end{equation}
\end{definition}
When $L = 1$, Eq.\eqref{eq2dmw01} degenerates into standard 2D-WGFRFT.

\begin{theorem}\label{th2d01}
	 If the window functions $\{g_l\}_{l=1}^L: \mathcal{V}_1 \times \mathcal{V}_2 \to \mathbb{C}$ satisfies $T_{i_1, i_2}^{\alpha}(g_l) \neq 0$, for any two-dimensional signal ${f}$, the inverse two-dimensional multi-window graph fractional Fourier transform (I2D-MWGFRFT) is given by
	\begin{equation}\begin{aligned}\label{eq2dmw02}
		f(n_1,n_2)=\frac{1}{(N_{1}N_{2})^{\alpha} \sum\limits_{l=1}^{L}\left\|T_{i_{1},i_{2}}^{\alpha}g_l\right\|_{2}^{2}}
		\times\sum_{i_{1}=1}^{N_{1}}\sum_{i_{2}=1}^{N_{2}}
		\sum_{k_{1}=0}^{N_{1}-1}\sum_{k_{2}=0}^{N_{2}-1}
	Wf(g_{i_1,i_2;k_1,k_2}^{(l\alpha)})
g_{i_1,i_2;k_1,k_2}^{(l\alpha)}(n_1,n_2).
	\end{aligned}\end{equation}
\end{theorem}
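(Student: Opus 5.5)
The plan is a direct computation that collapses the modulation sum first and then the translation sum. The one interpretive point is that the normalizing constant must be read at the output vertex. That is, the denominator in \eqref{eq2dmw02} should be $(N_1N_2)^{\alpha}\sum_{l}\|T^{\alpha}_{n_1,n_2}g_l\|_2^2$, and the sum over $l$ should also run over the reconstruction sum, consistent with Definition~\ref{def2dmw1} and with the frame constants \eqref{eq2dmw05}--\eqref{eq2dmw06} of Theorem~\ref{th2d02}. The hypothesis $T^{\alpha}_{i_1,i_2}g_l\neq 0$ is then exactly what makes this denominator nonzero.

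\textbf{Step 1 (factor the atom).} Write each atom as
\[
g^{(l\alpha)}_{i_1,i_2;k_1,k_2}(n_1,n_2)=(N_1N_2)^{\alpha/2}\,\gamma^{(1)}_{k_1}(n_1)\gamma^{(2)}_{k_2}(n_2)\,(T^{\alpha}_{i_1,i_2}g_l)(n_1,n_2),
\]
using the definitions of $M^{\alpha}_{k_1,k_2}$ and $T^{\alpha}_{i_1,i_2}$. Then
\[
\langle f,g^{(l\alpha)}_{i_1,i_2;k_1,k_2}\rangle=(N_1N_2)^{\alpha/2}\sum_{m_1,m_2}f(m_1,m_2)\overline{(T^{\alpha}_{i_1,i_2}g_l)(m_1,m_2)}\;\overline{\gamma^{(1)}_{k_1}(m_1)\gamma^{(2)}_{k_2}(m_2)}.
\]

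\textbf{Step 2 (sum over $k_1,k_2$).} Multiply the coefficient by the atom evaluated at $(n_1,n_2)$ and sum over $k_1,k_2$. Since $\gamma^{(1)}$ and $\gamma^{(2)}$ are unitary, the completeness relations $\sum_{k_j}\gamma^{(j)}_{k_j}(n_j)\overline{\gamma^{(j)}_{k_j}(m_j)}=\delta(n_j,m_j)$ give
\[
\sum_{k_1,k_2}\langle f,g^{(l\alpha)}_{i_1,i_2;k_1,k_2}\rangle\, g^{(l\alpha)}_{i_1,i_2;k_1,k_2}(n_1,n_2)=(N_1N_2)^{\alpha}f(n_1,n_2)\,\bigl|(T^{\alpha}_{i_1,i_2}g_l)(n_1,n_2)\bigr|^2 .
\]

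\textbf{Step 3 (sum over $i_1,i_2$ and $l$).} Show that $\sum_{i_1,i_2}|(T^{\alpha}_{i_1,i_2}g_l)(n_1,n_2)|^2=\|T^{\alpha}_{n_1,n_2}g_l\|_2^2$. As a function of $(i_1,i_2)$, the quantity $(T^{\alpha}_{i_1,i_2}g_l)(n_1,n_2)$ equals $(N_1N_2)^{\alpha/2}$ times the inverse 2D-SGFRFT, evaluated at $(i_1,i_2)$, of the conjugated coefficients $\overline{\widehat g^{\alpha}_l(p_1,p_2)\,\gamma^{(1)}_{p_1}(n_1)\gamma^{(2)}_{p_2}(n_2)}$. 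Parseval therefore gives
\[
\sum_{i_1,i_2}\bigl|(T^{\alpha}_{i_1,i_2}g_l)(n_1,n_2)\bigr|^2=(N_1N_2)^{\alpha}\sum_{p_1,p_2}|\widehat g^{\alpha}_l(p_1,p_2)|^2\,|\gamma^{(1)}_{p_1}(n_1)\gamma^{(2)}_{p_2}(n_2)|^2 .
\]
Parseval applied directly to $T^{\alpha}_{n_1,n_2}g_l$ in the variable $(m_1,m_2)$ yields the same expression for $\|T^{\alpha}_{n_1,n_2}g_l\|_2^2$. This uses no realness or symmetry of the eigenvectors.

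\textbf{Step 4 (conclude).} Summing Step 2 over $l,i_1,i_2$ and applying Step 3 gives
\[
\sum_{l}\sum_{i_1,i_2}\sum_{k_1,k_2}\langle f,g^{(l\alpha)}_{i_1,i_2;k_1,k_2}\rangle\, g^{(l\alpha)}_{i_1,i_2;k_1,k_2}(n_1,n_2)=(N_1N_2)^{\alpha}\,f(n_1,n_2)\sum_{l}\|T^{\alpha}_{n_1,n_2}g_l\|_2^2 .
\]
Dividing by the nonzero factor yields \eqref{eq2dmw02}.

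The main obstacle is Step 3 together with the index bookkeeping. One has to recognize that the translation sum produces a norm of the translate centered at the output vertex, not at the summation vertex, and then state the formula with the normalization placed accordingly.
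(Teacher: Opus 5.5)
Your argument is correct and is essentially the paper's proof. The paper expands both atoms spectrally and uses completeness in $(k_1,k_2)$ and orthogonality in $(i_1,i_2)$ to reach $(N_1N_2)^{2\alpha}f(n_1,n_2)\sum_{l}\sum_{p_1,p_2}|\widehat{g}_l^{\alpha}|^2|\gamma^{(1)}_{p_1}(n_1)\gamma^{(2)}_{p_2}(n_2)|^2$, which is your Steps 2--3 combined; its norm computation also silently ends at $(n_1,n_2)$, so it in effect uses the normalization $\sum_l\|T^{\alpha}_{n_1,n_2}g_l\|_2^2$, with the $l$-sum outside, that you make explicit. One cosmetic slip: as a function of $(i_1,i_2)$, $(T^{\alpha}_{i_1,i_2}g_l)(n_1,n_2)$ is the complex conjugate of the inverse 2D-SGFRFT you describe, which is harmless since only its modulus enters Step 3.
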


\begin{proof}
	See \ref{2datheorem03}.
\end{proof}

\subsection{Fast algorithm}

Let us consider a Cartesian product graph $\mathcal{G}_1 \square \mathcal{G}_2$ comprising $N_1 \times N_2$ vertices. The direct computation of the two-dimensional multi-window graph fractional Fourier transform , denoted as $Wf(g_{i_1,i_2;k_1,k_2}^{(l\alpha)})$, involves $(N_1 \times N_2)^2$ elements. Since calculating each individual element entails a complexity of $O((N_1 \times N_2)^2)$, the total computational burden scales as $O(L(N_1 \times N_2)^4)$. To mitigate this prohibitive cost, we propose an efficient fast calculation method.

First, we define the $\Theta$-field via the two-dimensional spectral graph fractional Fourier transform of the 2D-MWGFRFT as follows:
\begin{equation*}
	\Theta(k_1, k_2, k_1', k_2') = \sum_{i_1=1}^{N_1} \sum_{i_2=1}^{N_2} Wf(g_{i_1,i_2;k_1^\prime,k_2^\prime}^{(l\alpha)}) \overline{\gamma_{k_1}^{(1)}(i_1) \gamma_{k_2}^{(2)}(i_2)}.
\end{equation*}
By substituting the definition of the transform, the $\Theta$-field can be expanded as:
\begin{equation*}
	\begin{aligned}
		&\Theta(k_1, k_2, k_1', k_2') \\
		=& \sum_{l=1}^{L} \sum_{i_1=1}^{N_1} \sum_{i_2=1}^{N_2} \left( (N_1N_2)^\alpha \sum_{n_1=1}^{N_1}\sum_{n_2=1}^{N_2}f(n_1,n_2) \overline{\gamma_{k_1^\prime}^{(1)}(n_1)\gamma_{k_2^\prime}^{(2)}(n_2)} \right. \\
		& \times \left. \left( \sum_{p_1=0}^{N_1-1}\sum_{p_2=0}^{N_2-1} \widehat{g}_{l}^{\alpha*}\left(r_{p_1},r_{p_2}\right) \gamma_{p_1}^{(1)}(i_1)\gamma_{p_2}^{(2)}(i_2) \overline{\gamma_{p_1}^{(1)}(n_1)\gamma_{p_2}^{(2)}(n_2)} \right) \overline{\gamma_{k_1}^{(1)}(i_1) \gamma_{k_2}^{(2)}(i_2)} \right) \\
		=& (N_1N_2)^\alpha \sum_{l=1}^{L} \sum_{p_1=0}^{N_1-1}\sum_{p_2=0}^{N_2-1} \sum_{n_1=1}^{N_1} \sum_{n_2=1}^{N_2} f(n_1,n_2) \overline{\gamma_{k_1^\prime}^{(1)}(n_1)\gamma_{k_2^\prime}^{(2)}(n_2)} \widehat{g}_{l}^{\alpha*}\left(r_{p_1},r_{p_2}\right) \overline{\gamma_{p_1}^{(1)}(n_1)\gamma_{p_2}^{(2)}(n_2)} \\
		& \times \left( \sum_{i_1=1}^{N_1}\sum_{i_2=1}^{N_2} \gamma_{p_1}^{(1)}(i_1)\gamma_{p_2}^{(2)}(i_2) \overline{\gamma_{k_1}^{(1)}(i_1) \gamma_{k_2}^{(2)}(i_2)} \right) \\
		=& (N_1N_2)^\alpha \sum_{l=1}^{L} \sum_{n_1=1}^{N_1} \sum_{n_2=1}^{N_2} f(n_1,n_2) \overline{\gamma_{k_1^\prime}^{(1)}(n_1)\gamma_{k_2^\prime}^{(2)}(n_2)} \overline{\gamma_{k_1}^{(1)}(n_1)\gamma_{k_2}^{(2)}(n_2)} \widehat{g}_{l}^{\alpha*}\left(r_{k_1},r_{k_2}\right) \\
		=& (N_1N_2)^\alpha \tilde{f}(k_{1}, k_{2}, k_{1}^{\prime}, k_{2}^{\prime}) \sum_{l=1}^{L} \widehat{g}_{l}^{\alpha*}(r_{k_1},r_{k_2}),
	\end{aligned}
\end{equation*}
where the auxiliary term $\tilde{f}$ is defined as:
\begin{equation*}
	\tilde{f}(k_{1}, k_{2}, k_{1}^{\prime}, k_{2}^{\prime}) = \sum_{n_1=1}^{N_1} \sum_{n_2=1}^{N_2} f(n_1,n_2) \overline{\gamma_{k_1^\prime}^{(1)}(n_1)\gamma_{k_2^\prime}^{(2)}(n_2)} \overline{\gamma_{k_1}^{(1)}(n_1)\gamma_{k_2}^{(2)}(n_2)}.
\end{equation*}
Building upon this, we formalize the efficient implementation of the 2D-MWGFRFT.

\begin{definition}\label{def2dmw2}
	(Fast two-dimensional multi-window graph fractional Fourier transform, i.e., F2D-MWGFRFT)
	The F2D-MWGFRFT is obtained by applying the inverse 2D-SGFRFT to the $\Theta$-field with respect to the variables $(k_1, k_2)$:
	\begin{equation*}
		\begin{aligned}
			FWf(g_{i_1,i_2;k_1^\prime,k_2^\prime}^{(l\alpha)}) 
			&= \sum_{k_1=0}^{N_1-1} \sum_{k_2=0}^{N_2-1} \Theta(k_1, k_2, k_1', k_2') \gamma_{k_1}^{(1)}(i_1) \gamma_{k_2}^{(2)}(i_2) \\
			&= (N_1N_2)^\alpha
			\sum_{l=1}^{L} \sum_{k_1=0}^{N_1-1} \sum_{k_2=0}^{N_2-1}  
			\tilde{f}(k_{1}, k_{2}, k_{1}^{\prime}, k_{2}^{\prime}) \widehat{g}_{l}^{\alpha*}(r_{k_1}^{(1)},r_{k_2}^{(2)}) \gamma_{k_1}^{(1)}(i_1) \gamma_{k_2}^{(2)}(i_2).
		\end{aligned}
	\end{equation*}
\end{definition}

The algorithmic procedure for the F2D-MWGFRFT is organized into the following steps:

\begin{enumerate}
	\item[\textbf{(i)}] \textbf{Spectral Window Computation:} Precompute $\widehat{g}_{l}^{\alpha*}$ for $l = 1, \ldots, L$. This step incurs a computational complexity of $O(L)$.
	
	\item[\textbf{(ii)}] \textbf{Auxiliary Matrix Construction:} Calculate $\tilde{f}(k_{1}, k_{2}, k_{1}^{\prime}, k_{2}^{\prime})$ using matrix operations:
	\begin{equation}
		\tilde{\mathbf{F}} = \left[ \tilde{f}(k_{1}, k_{2}, k_{1}^{\prime}, k_{2}^{\prime}) \right] = \left( \mathbf{F} \circ (\gamma_{k_1}^{(1)}\gamma_{k_2}^{(2)})^H \right)* 
		(\gamma_{k_1}^{(1)}\gamma_{k_2}^{(2)}),
	\end{equation}
	where $\circ$ represents Hadamard product and $*$ denotes the standard matrix multiplication. Each row of $\mathbf{F}$ contains the signal $f$, and $\left(\gamma_{k_1}^{(1)}\gamma_{k_2}^{(2)}\right)^H$ denotes the complex conjugate of the graph fractional basis $\gamma_{k_1}^{(1)}\gamma_{k_2}^{(2)}$. This step has $O\left((N_1 \times N_2)^3\right)$ computational complexity.

\item[\textbf{(iii)}] \textbf{Spectral Window Matrix Expansion:} 
To facilitate the subsequent operator mapping, the spectral window matrix $\mathbf{G}_l^{\alpha} \in \mathbb{C}^{N_1 \times N_2}$ is constructed via the 2D-SGFRFT of the window function $g_l$, defined as:
\begin{equation*}
	\mathbf{G}_l^{\alpha} = \begin{pmatrix}
		\hat{g}_l^{\alpha} (r_0^{(1)},r_0^{(2)}) & \hat{g}_l^{\alpha} (r_0^{(1)},r_1^{(2)}) & \cdots & \hat{g}_l^{\alpha} (r_0^{(1)},r_{N_2-1}^{(2)}) \\
		\hat{g}_l^{\alpha} (r_1^{(1)},r_0^{(2)}) & \hat{g}_l^{\alpha} (r_1^{(1)},r_1^{(2)}) & \cdots & \hat{g}_l^{\alpha} (r_1^{(1)},r_{N_2-1}^{(2)}) \\
		\vdots & \vdots & \ddots & \vdots \\
		\hat{g}_l^{\alpha} (r_{N_1-1}^{(1)},r_0^{(2)}) & \hat{g}_l^{\alpha} (r_{N_1-1}^{(1)},r_1^{(2)}) & \cdots & \hat{g}_l^{\alpha} (r_{N_1-1}^{(1)},r_{N_2-1}^{(2)})
	\end{pmatrix}.
\end{equation*}
The matrix $\mathbf{G}_l^{\alpha}$ is subsequently augmented into a higher-dimensional operator $\mathbf{\Psi}^{l\alpha} \in \mathbb{C}^{(N_1 N_2) \times (N_1 N_2)}$. 
Specifically, $\mathbf{G}_l^{\alpha}$ is first vectorized into a row vector 
$ [\text{vec}(\mathbf{G}_l^{\alpha})]^T \in \mathbb{C}^{1 \times (N_1 N_2)}$. 
This vector is then expanded into the matrix $\mathbf{\Psi}^{l\alpha}$ by replicating it across $(N_1 N_2)$ rows to align with the dimensions of the kernel matrix $\tilde{\mathbf{F}}$. This expansion is mathematically formulated as:
\begin{equation}
	\mathbf{\Psi}^{l\alpha} = \mathbf{1}_{(N_1 N_2) \times 1} \otimes [ \text{vec} ( \mathbf{G}_l^{\alpha} ) ]^T,
\end{equation}
where $\mathbf{1}_{(N_1 N_2) \times 1}$ denotes an all-ones column vector and $\otimes$ denotes the Kronecker product. This configuration ensures that each row of $\mathbf{\Psi}^{l\alpha}$ preserves the complete spectral profile of the window function, thereby enabling efficient element-wise operations. Notably, this formation step is primarily a structural reorganization and does not increase the overall computational complexity of the algorithm.

	\item[\textbf{(iv)}] \textbf{$\Theta$-Domain Transformation:} Compute the $\Theta$-domain representation as:
	\begin{equation}
		\Theta = (N_1N_2)^\alpha * (\Psi^{l\alpha})^* \circ \tilde{\mathbf{F}},
	\end{equation}
	which requires $O((N_1 \times N_2)^3)$ operations.
	
	\item[\textbf{(v)}] \textbf{Inverse Transform:} The vertex-frequency analysis is retrieved by applying the inverse 2D-SGFRFT to each row of $\Theta$:
	\begin{equation}
		FWf_l = (\gamma_{k_1}^{(1)}\gamma_{k_2}^{(2)})* \mathbf{\Theta}^{.T},
	\end{equation}
	where $\mathbf{\Theta}^{.T}$ denotes the non-conjugate transpose of $\mathbf{\Theta}$. The complexity of this step is $O((N_1 \times N_2)^3)$.
	
	\item[\textbf{(vi)}] \textbf{Multi-window Aggregation:} Perform matrix addition across the $L$ window-specific matrices. This step has a complexity of $O(L(N_1 \times N_2)^2)$.
\end{enumerate}

In summary, the total computational complexity of the F2D-MWGFRFT is $O(L(N_1 \times N_2)^3)$. As the dimensions $N_1$ and $N_2$ grow, this optimized approach provides a significant enhancement in efficiency compared to the original transformation's $O(L(N_1 \times N_2)^4)$ complexity.

\section{Experiments}\label{section4}

In this section, we conduct a series of numerical experiments to validate the theoretical framework of the two-dimensional multi-window graph fractional Fourier transform (2D-MWGFRFT) and its fast counterpart, the F2D-MWGFRFT.
Specifically, we first demonstrate the superiority of the 2-D transforms over their 1-D versions, highlighting their advantages in capturing structural and directional information on product graphs. 
Next, the algorithmic efficiency of the F2D-MWGFRFT is evaluated in terms of computational complexity and execution time. 
Subsequently, we analyze the impact of the parameter $L$ on the concentration and resolution of vertex-frequency representations. 
Finally, a comparative study between the F2D-WGFRFT and the F2D-MWGFRFT is performed to verify the effectiveness of the multi-window mechanism in vertex-frequency feature extraction and graph signal characterization.

\subsection{Superiority of 2-D Transforms over 1-D Counterparts}

The primary motivation for extending the windowed graph fractional Fourier transform to a 2-D formulation is to overcome the structural limitations inherent in 1-D graph signal processing when dealing with multi-dimensional relational data. To intuitively illustrate the significance and advantages of 2-D graph fractional operators, we conduct a comparative spectral analysis.

\subsubsection{Experimental configuration and signal synthesis}
We consider two factor path graphs, $\mathcal{G}_1$ and $\mathcal{G}_2$, with $N_1 = 9$ and $N_2 = 12$ vertices, respectively. 
The Cartesian product graph is defined as $\mathcal{G}_{\square} = \mathcal{G}_1 \square \mathcal{G}_2$, resulting in a composite network with $N_{total} = 108$ vertices. 
A Kronecker-structured joint fractional basis is constructed as $\gamma^{prod} = \gamma_{k_1}^{(1)} \otimes \gamma_{k_2}^{(2)}$, where $\gamma_{k_i}^{(i)}$ denotes the eigenvector of the fractional Laplacian operator $\mathcal{L}_i^{(\alpha)}$ for the $i$-th factor graph ($i=1,2$).

To simulate a non-stationary localized disturbance in the vertex domain, we synthesize a graph signal $f_1$ characterized by a localized impulse:
\begin{equation*}
	{f}_1(n) = 
	\begin{cases} 
		1, & n \in \{27, 28, 29\}; \\
		0, & \text{otherwise.}
	\end{cases}
\end{equation*}
This represents a localized excitation situated on the high-dimensional product topology. The spectral framework is implemented with a fractional order $\alpha = 0.7$ and a low-pass spectral window $g(\lambda) = \exp(-\tau \lambda)$ with a decay parameter $\tau=2$, serving as a localized spectral kernel for vertex-frequency feature extraction.

\subsubsection{Spectral characterization}

The experimental results—including the input signal, the corresponding spectral representations, and the filtered vertex-domain signal—are illustrated in Fig. \ref{2dfig101}.

\begin{figure*}[!t]
	\centering
	\subfigure[ a graph signal $f_1$ ]{\includegraphics[width=0.245\textwidth]{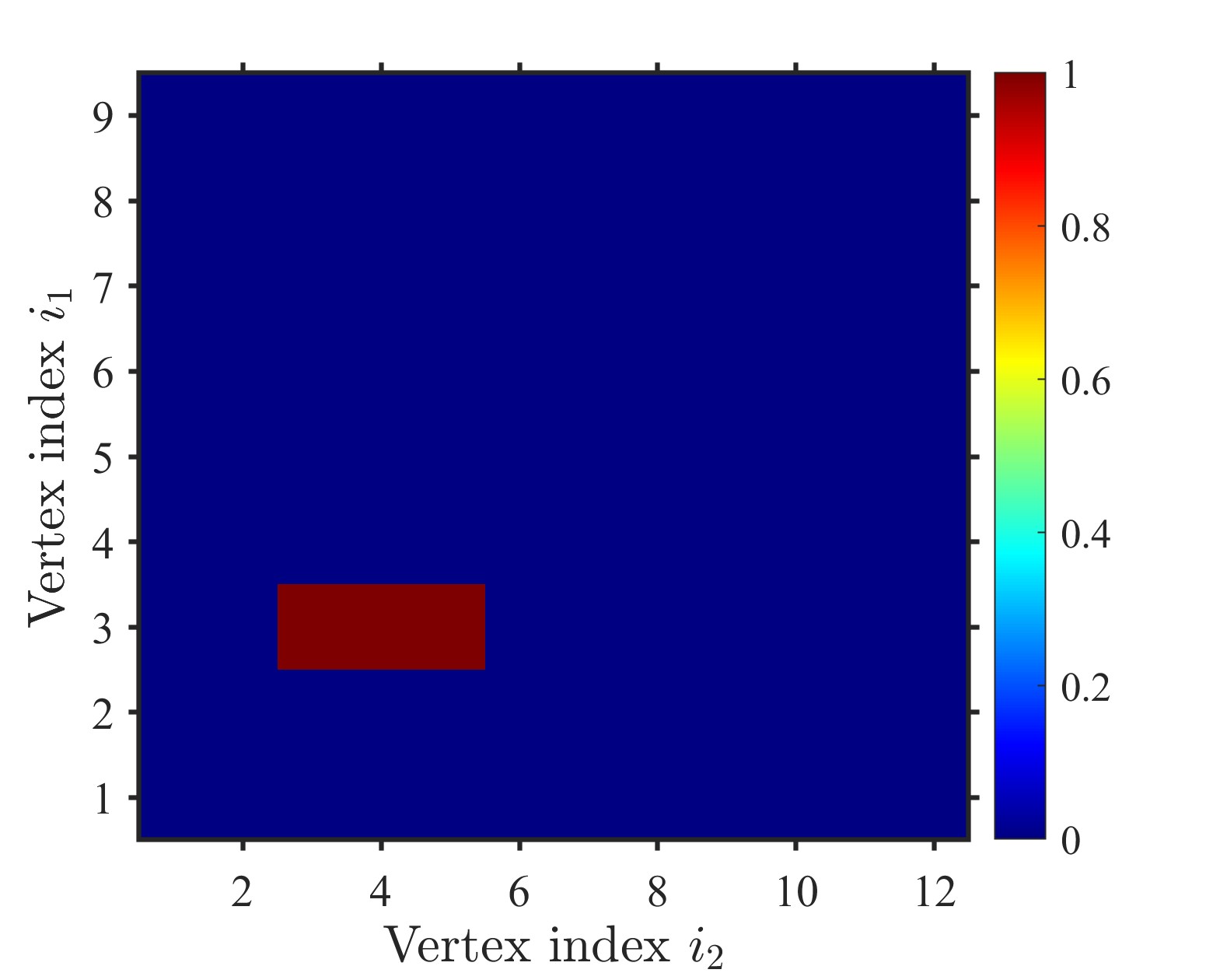}} \hfill
	\subfigure[ 1-D spectrum ]{\includegraphics[width=0.245\textwidth]{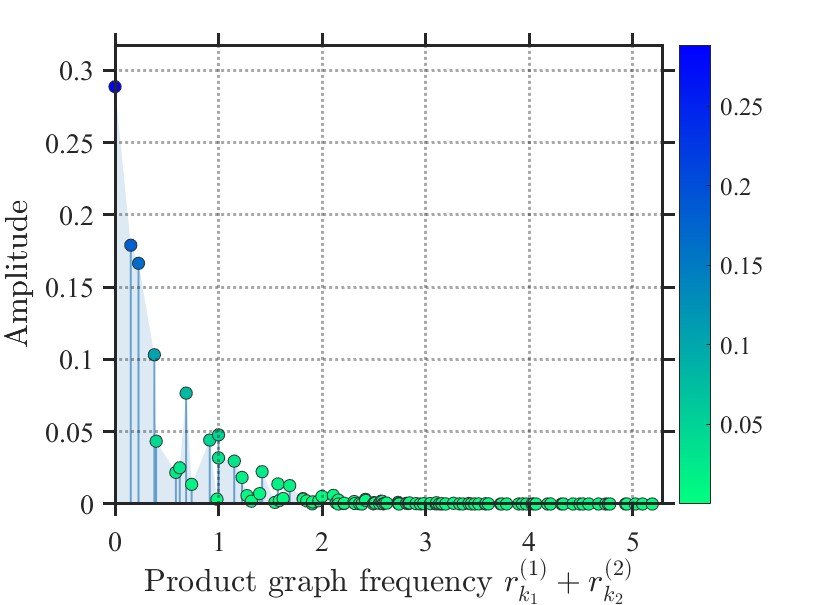}} \hfill
	\subfigure[2-D spectrum
	]{\includegraphics[width=0.245\textwidth]{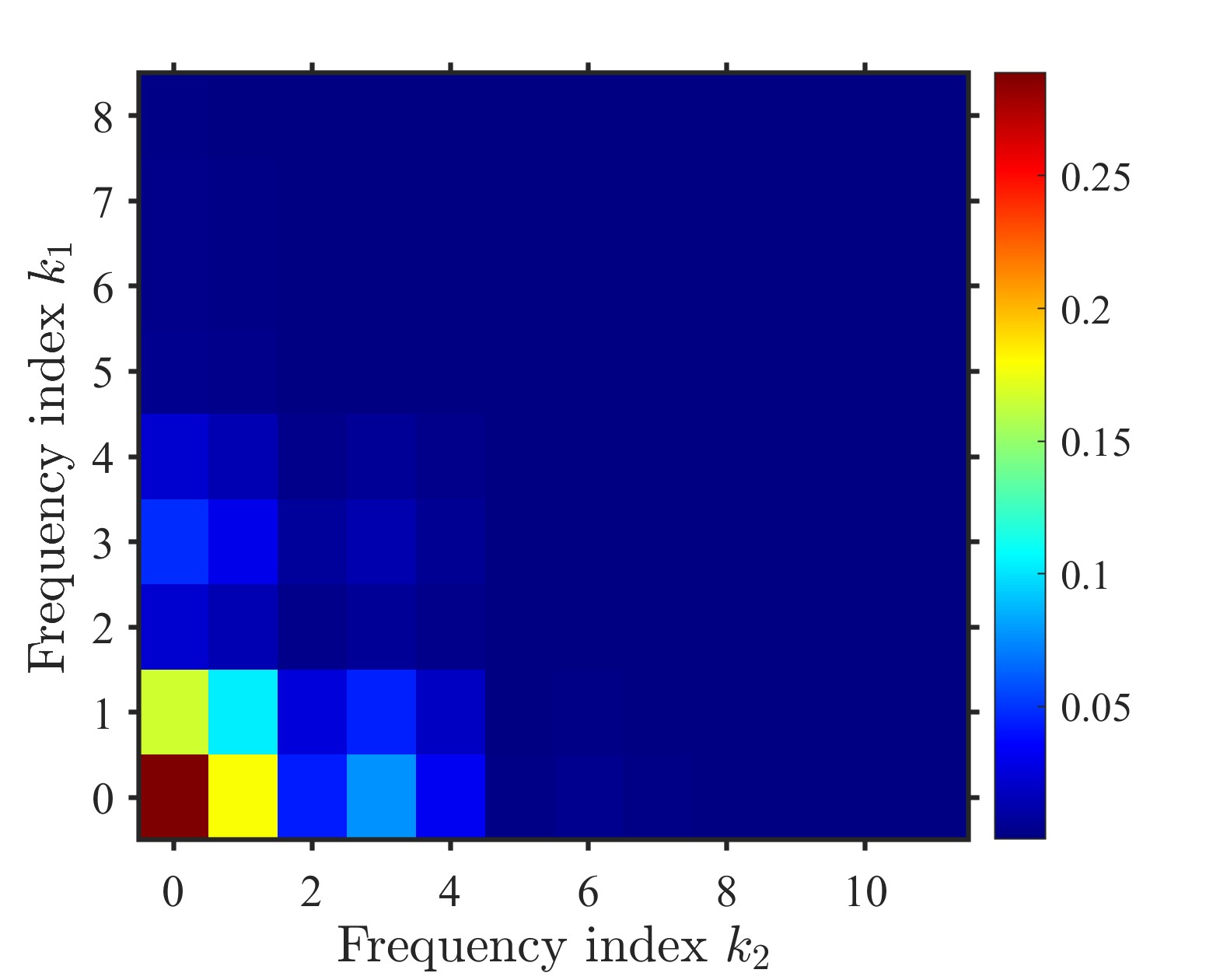}} \hfill
	\subfigure[filtered vertex-domain signal ]{\includegraphics[width=0.245\textwidth]{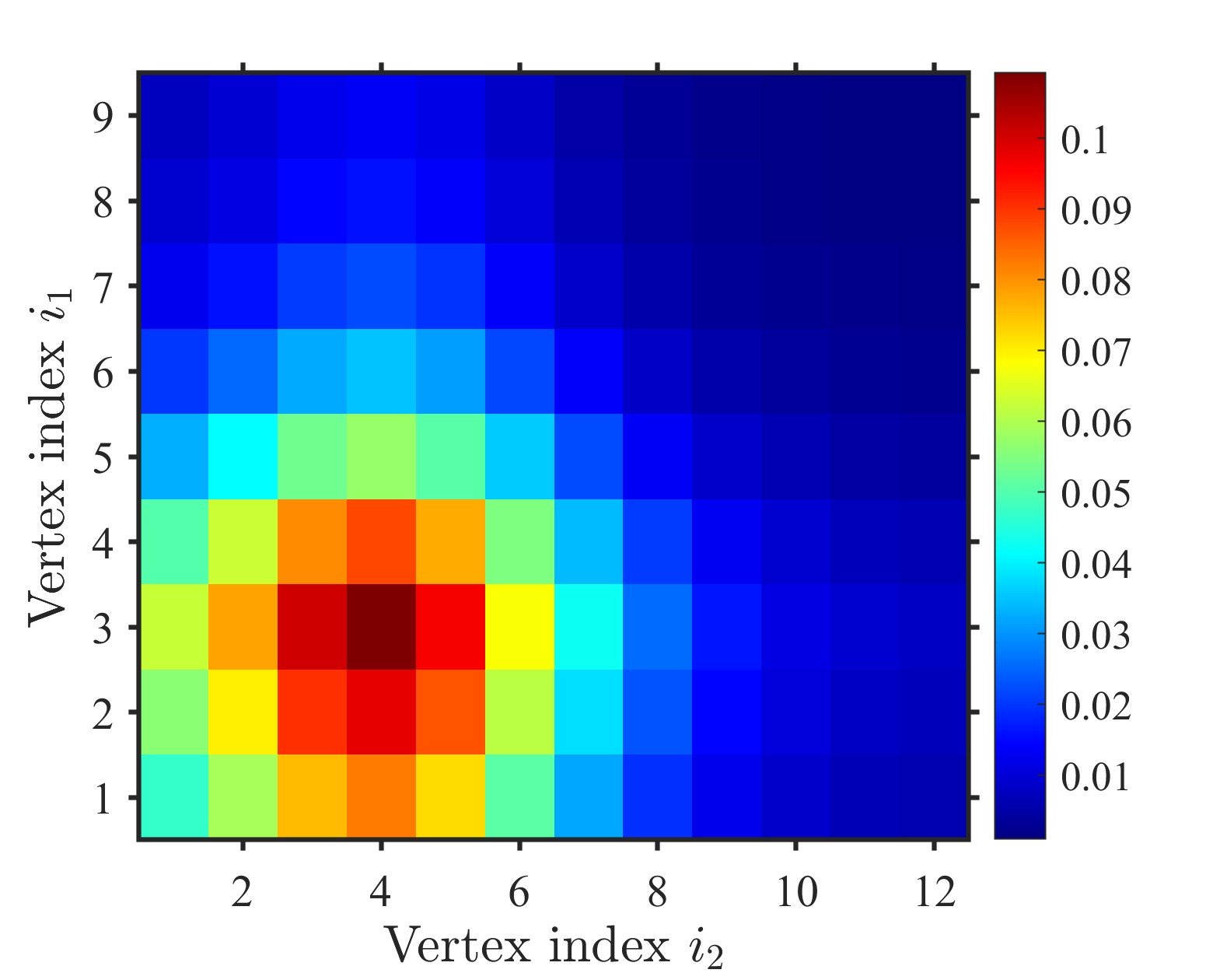}} \hfill
	\caption{Signal and spectral comparison}
	\label{2dfig101}
\end{figure*}

Observations of the frequency-domain spectra indicate that the 2-D transform successfully localizes signal energy in the low-frequency region while effectively suppressing high-frequency stochastic components. However, the most critical distinction arises from the spectral analysis:

\begin{itemize}
	\item \textbf{1-D Spectral Ambiguity:} As shown in Fig. \ref{2dfig101}(b), the spectral coefficients derived from a 1-D projection of the 2-D graph signal exhibit prominent ``multi-valued traits'' at specific graph frequencies. This phenomenon occurs because the 1-D transform maps $N_1 \times N_2$ product eigenvalues onto a single linear frequency axis. Consequently, multiple distinct joint frequency pairs $(r_{k_1}^{(1)}, r_{k_2}^{(2)})$ collapse into identical or near-identical scalar values. This spectral aliasing obfuscates the contribution of individual frequency components to the localized vertex-frequency analysis.
	
	\item \textbf{2-D Spectral Clarity:} Conversely, the 2-D spectrum provides a decoupled and disentangled representation of the signal energy distribution across the frequency indices $(k_1, k_2)$. The 2-D SGFRFT framework inherently alleviates this ambiguity by preserving the intrinsic dimensionality of the spectral domain. As depicted in Fig. \ref{2dfig101}(c), the spectral coefficients are highly concentrated in the low-frequency region of the joint domain, confirming that the 2-D transform robustly resolves the issue of multi-valued spectral coefficients.
\end{itemize}

In conclusion, the 2-D transform framework not only preserves the topological structure of the product graph but also facilitates a much more granular vertex-frequency characterization compared to conventional 1-D methodologies. This precision is indispensable for advanced applications such as multi-dimensional graph denoising, source separation, and feature extraction in complex networks.

\subsection{Algorithm evaluation of F2D-MWGFRFT}

This section evaluates the performance of the proposed fast algorithm (F2D-MWGFRFT). The evaluation focuses on two primary metrics: (i) \textit{Effectiveness}, which assesses acceleration performance and computational complexity, and (ii) \textit{Robustness}, which validates analytical precision and spectral invariance across diverse graph topologies.

\subsubsection{Effectiveness}

To quantitatively demonstrate the computational advantages, we compare the execution time of the F2D-MWGFRFT against the direct 2D-MWGFRFT implementation. Experiments are conducted on random ring graphs with a varying total number of vertices $N_{total} = N_1 \times N_2$ to capture asymptotic behavior. We set the fractional order to $\alpha = 0.7$ and utilize $L=4$ windows derived from a standard Gaussian kernel $\hat{g}(\lambda)$.

Table \ref{2dtab02} presents the relationship between $N_{\text{total}}$ and the computation time (in seconds). For the original 2D-MWGFRFT, the time cost increases significantly, consistent with the theoretical complexity of $O(L(N_1N_2)^4)$. In contrast, the F2D-MWGFRFT scales according to $O(L(N_1N_2)^3)$. The substantial performance gap highlights that the F2D-MWGFRFT achieves a drastic reduction in computation time as the graph dimension increases. This efficiency enables real-time vertex-frequency analysis for large-scale product graphs.
\begin{table}[htbp]
	\centering
	\caption{Computation Time Comparison: 2D-MWGFRFT vs. F2D-MWGFRFT} 
	\label{2dtab02}
	\begin{tabular}{@{}cccccccc@{}}
		\toprule
		{$N_{\text{total}}$}  & 32 & 64 & 128 & 256 & 512 & 1024 & 2048\\
		\midrule
		2D-MWGFRFT &  0.0373 & 0.0497  & 0.1494  & 0.9422  &  5.6994 & 52.0204 & 346.5577 \\ 
		F2D-MWGFRFT &  0.0370 & 0.0474  & 0.0494  &  0.0440 & 0.0728  & 0.3754 & 3.0974 \\
		\bottomrule
	\end{tabular}
\end{table}

\subsubsection{Robustness}
To verify the robustness of the fast algorithm, we analyze three distinct graph signals ($f_2, f_3, f_4$) defined over different topologies: a community graph, a random ring graph, and a sensor graph, each with $N_{total} = 150$. The basis vectors are extracted from the corresponding graph fractional Laplacian matrices.

Setting $\alpha = 0.9$ and employing $L=5$ heat diffusion kernels, we compare the resulting spectrograms. As shown in Fig. \ref{2dfig103}, the vertex-frequency representations generated by the F2D-MWGFRFT are virtually identical to those produced by the standard 2D-MWGFRFT. The algorithm successfully identifies characteristic frequency components and their spatial localizations across all topologies. This high degree of congruence confirms that the optimization in the F2D-MWGFRFT does not introduce numerical dissipation or precision loss.

\begin{figure*}[!t]
	\centering
	\subfigure[a community Cartesian product graph signal $f_2$] {\includegraphics[width=0.32\textwidth]{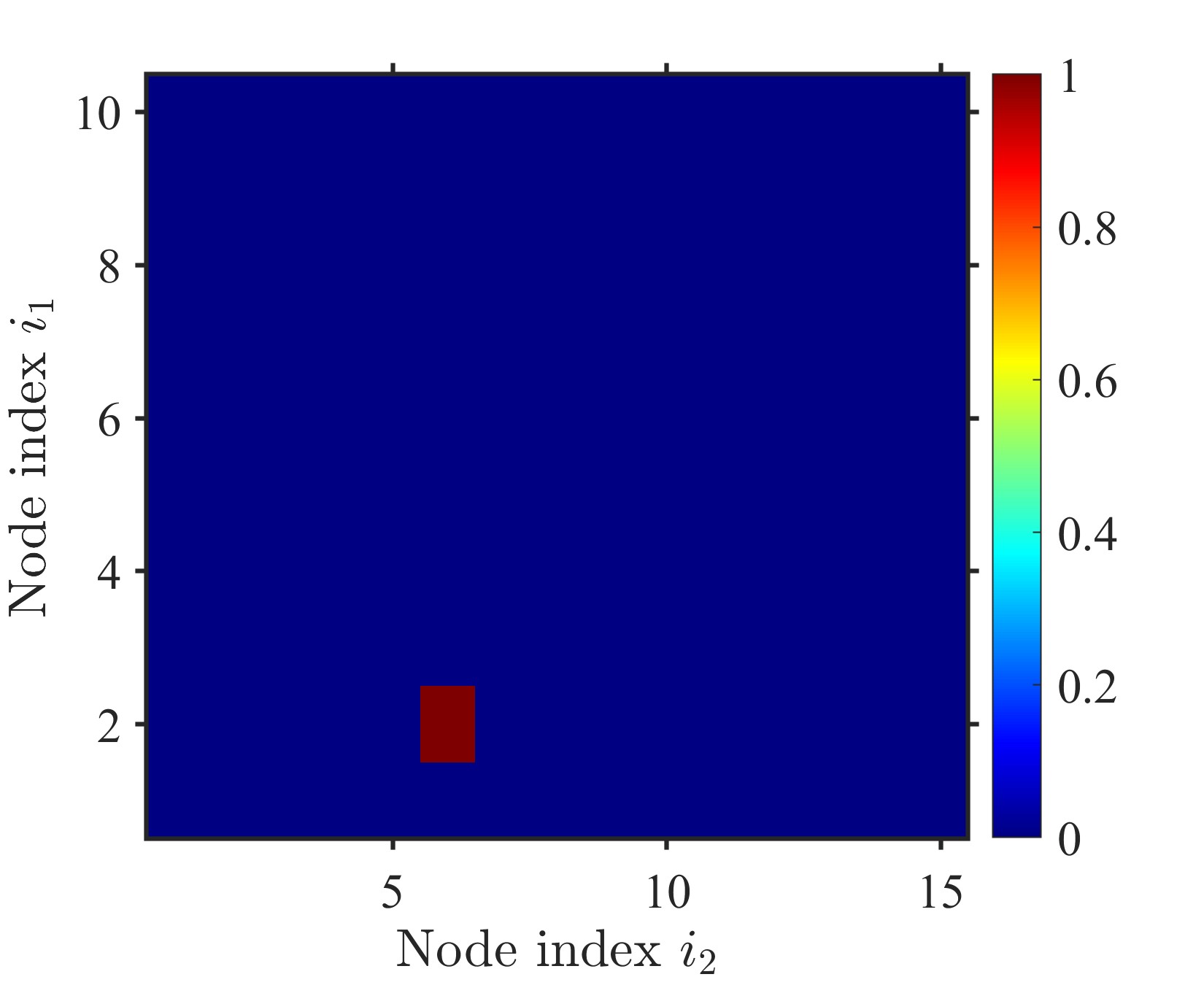}}\hfill
	\subfigure[2D-MWGFRFT of $f_2$]{\includegraphics[width=0.32\textwidth]{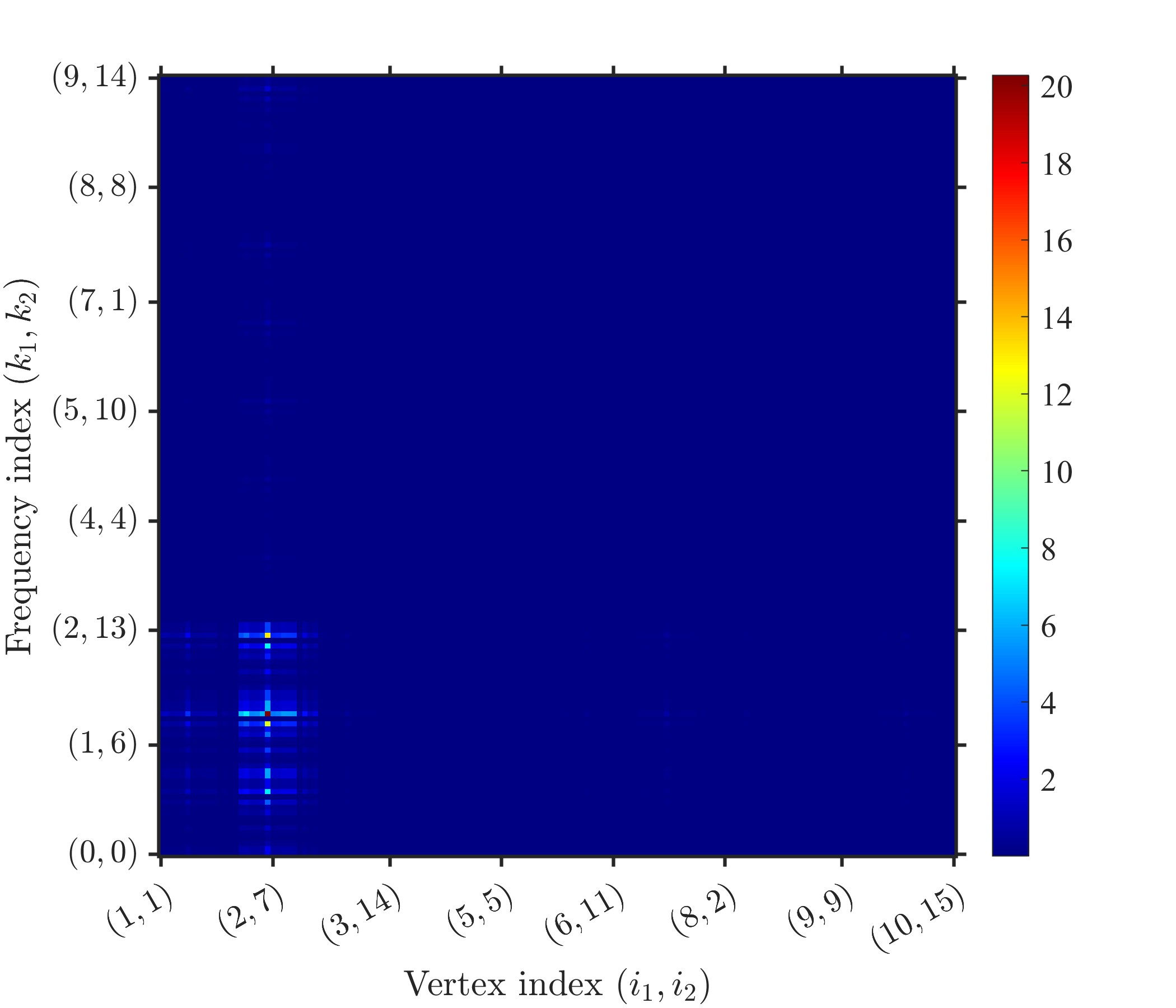}}\hfill
	\subfigure[F2D-MWGFRFT of $f_2$]{\includegraphics[width=0.32\textwidth]{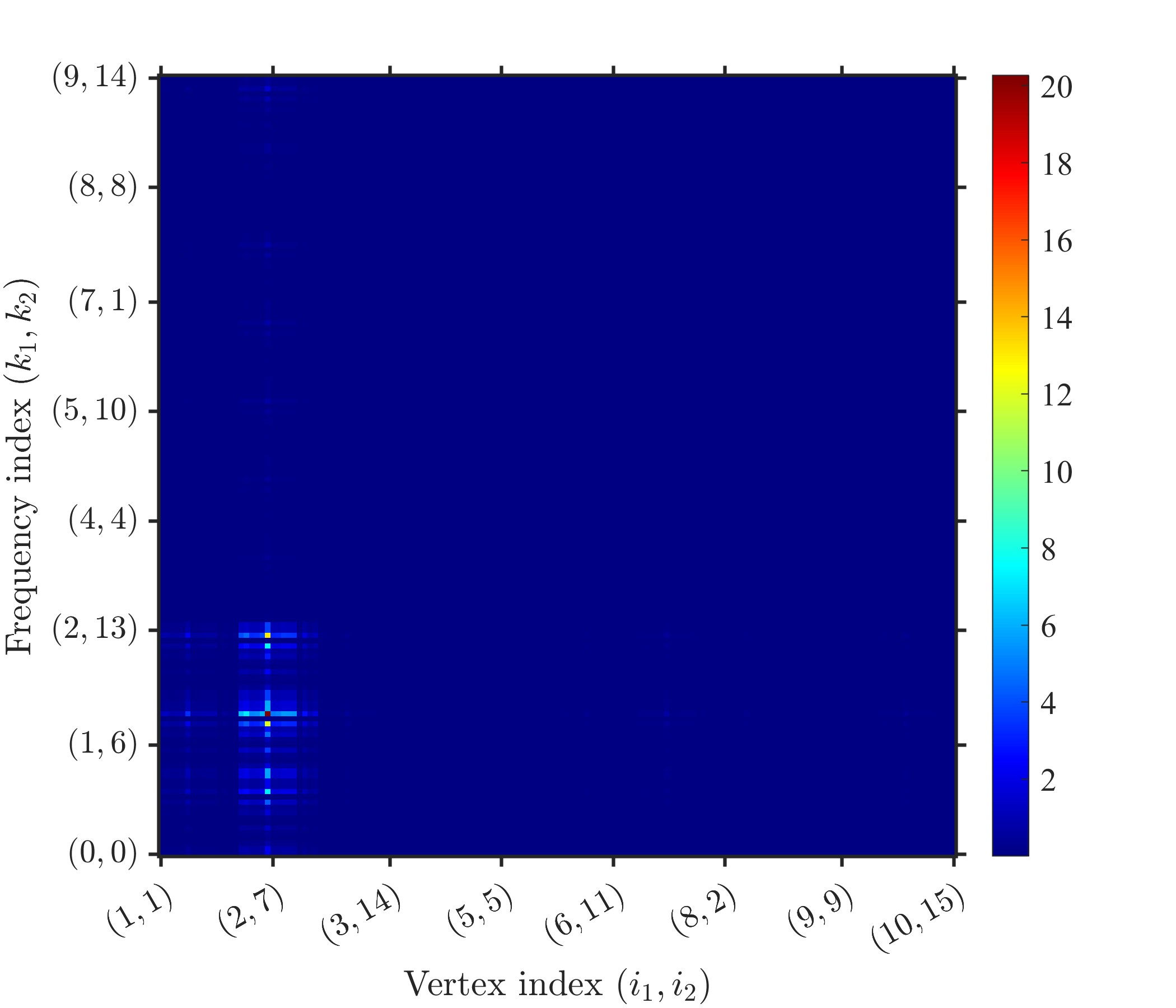}}   \\
	\subfigure[a random ring Cartesian product graph signal $f_3$] {\includegraphics[width=0.32\textwidth]{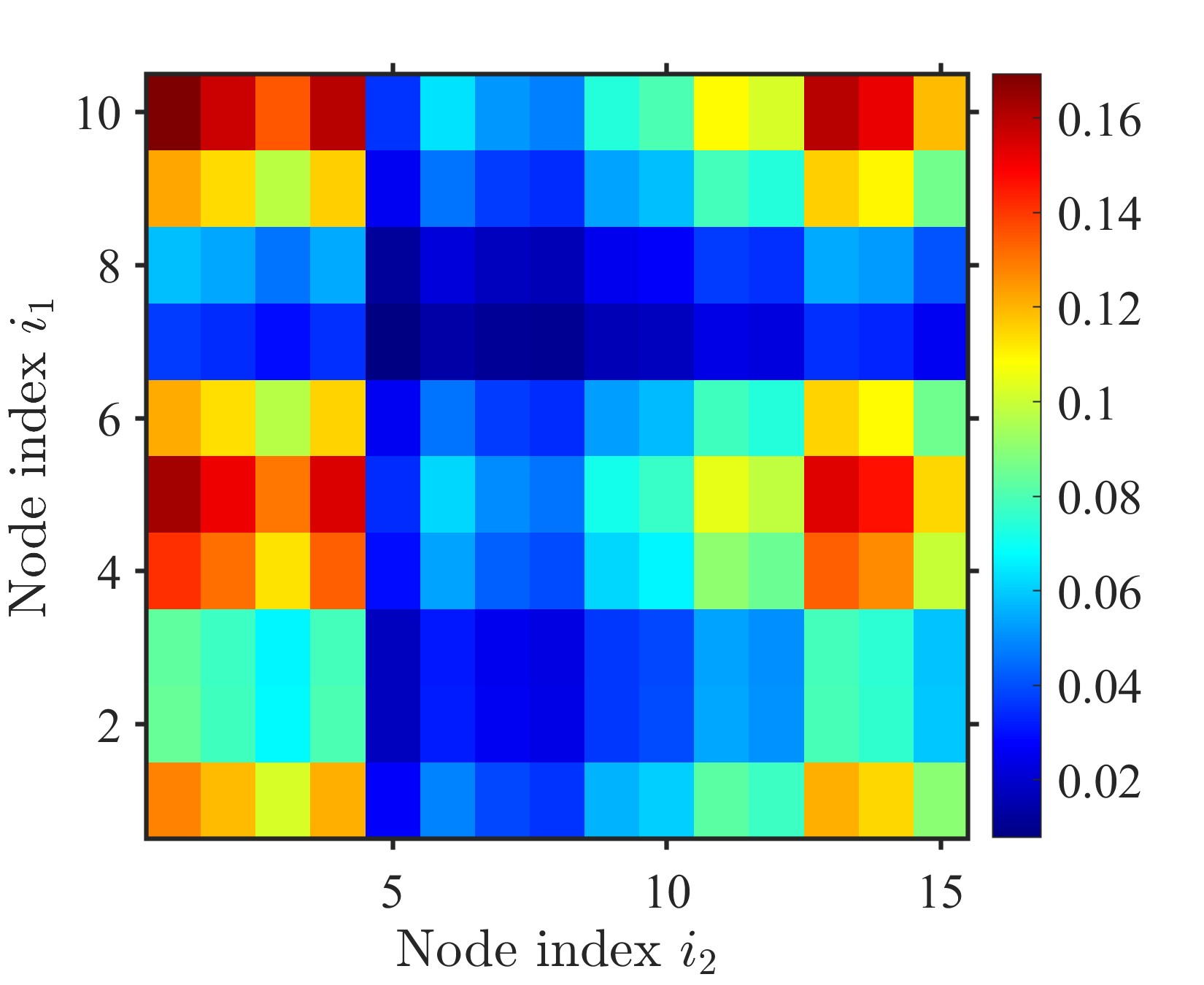}}\hfill     
	\subfigure[2D-MWGFRFT of $f_3$]{\includegraphics[width=0.32\textwidth]{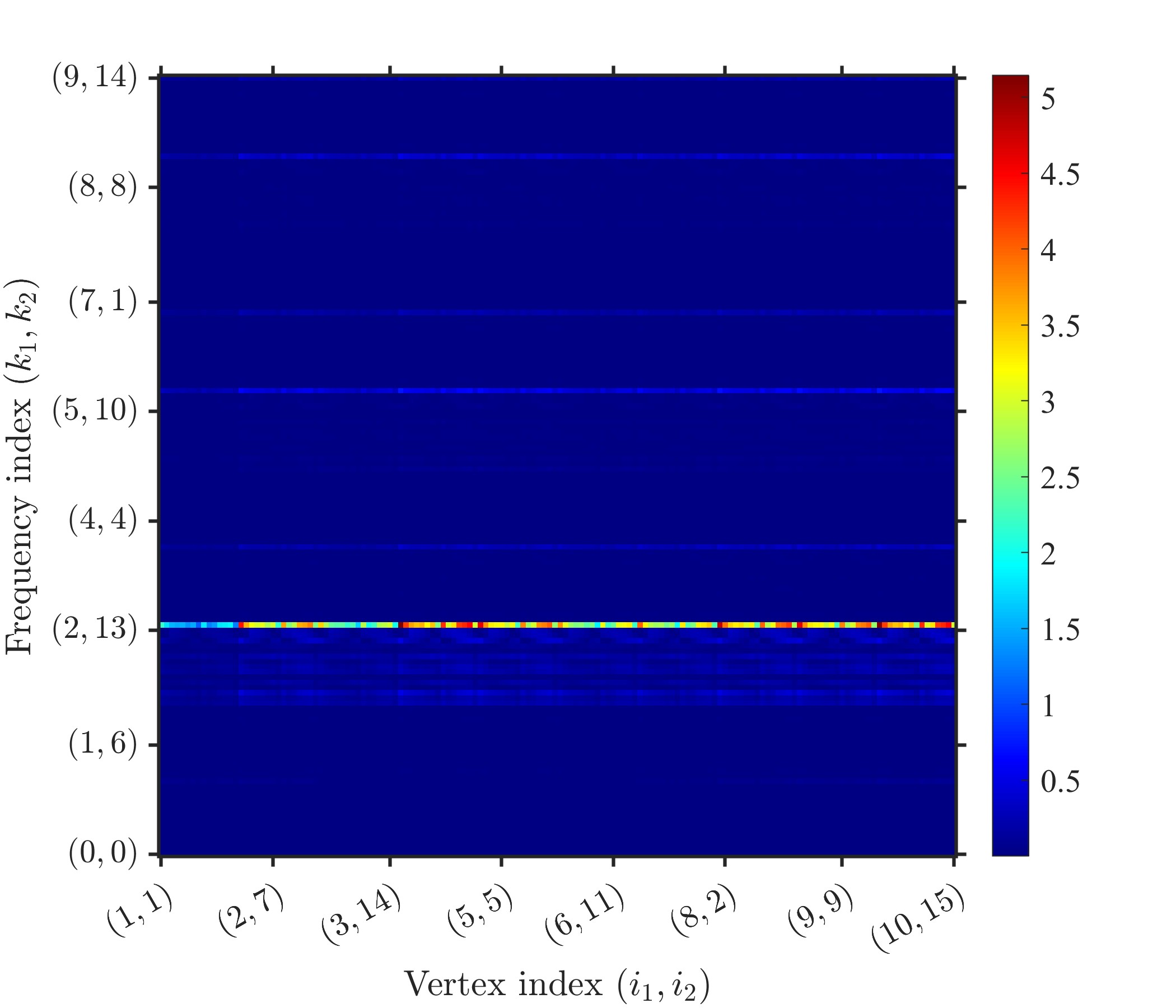}}\hfill
	\subfigure[F2D-MWGFRFT of $f_3$]{\includegraphics[width=0.32\textwidth]{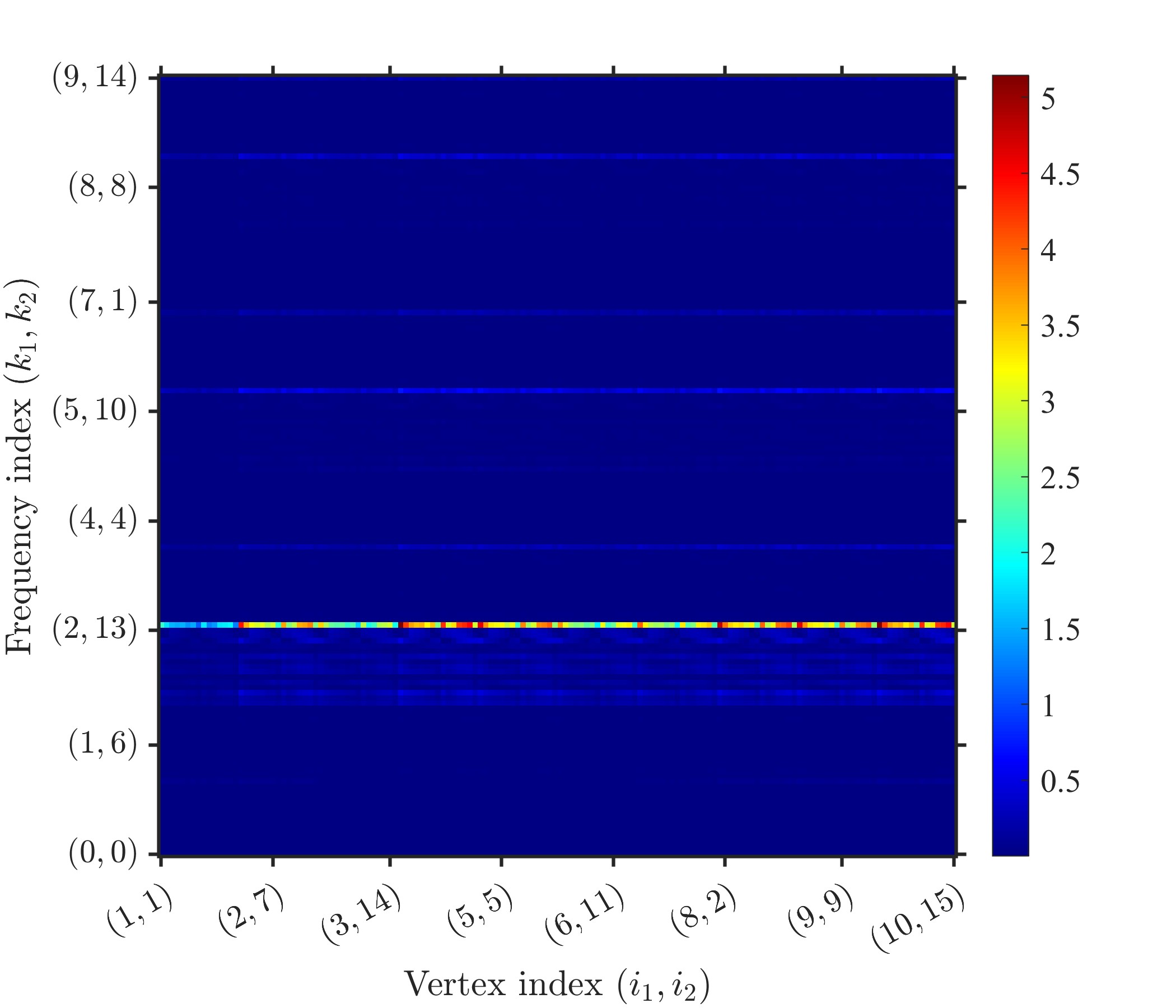}}  \hfill  \\
	\subfigure[a sensor Cartesian product graph signal $f_4$] {\includegraphics[width=0.32\textwidth]{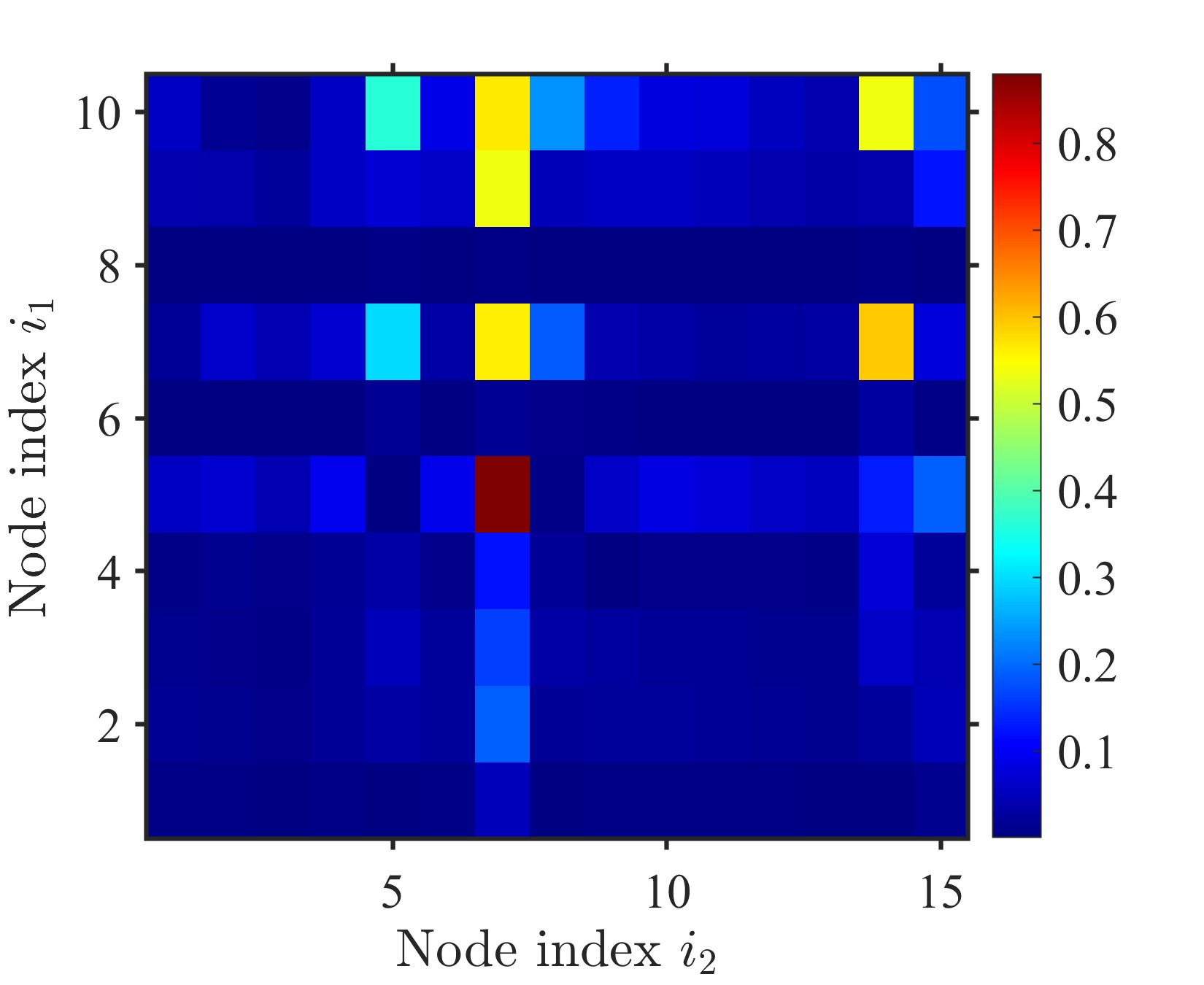}}\hfill
	\subfigure[2D-MWGFRFT of $f_4$]{\includegraphics[width=0.32\textwidth]{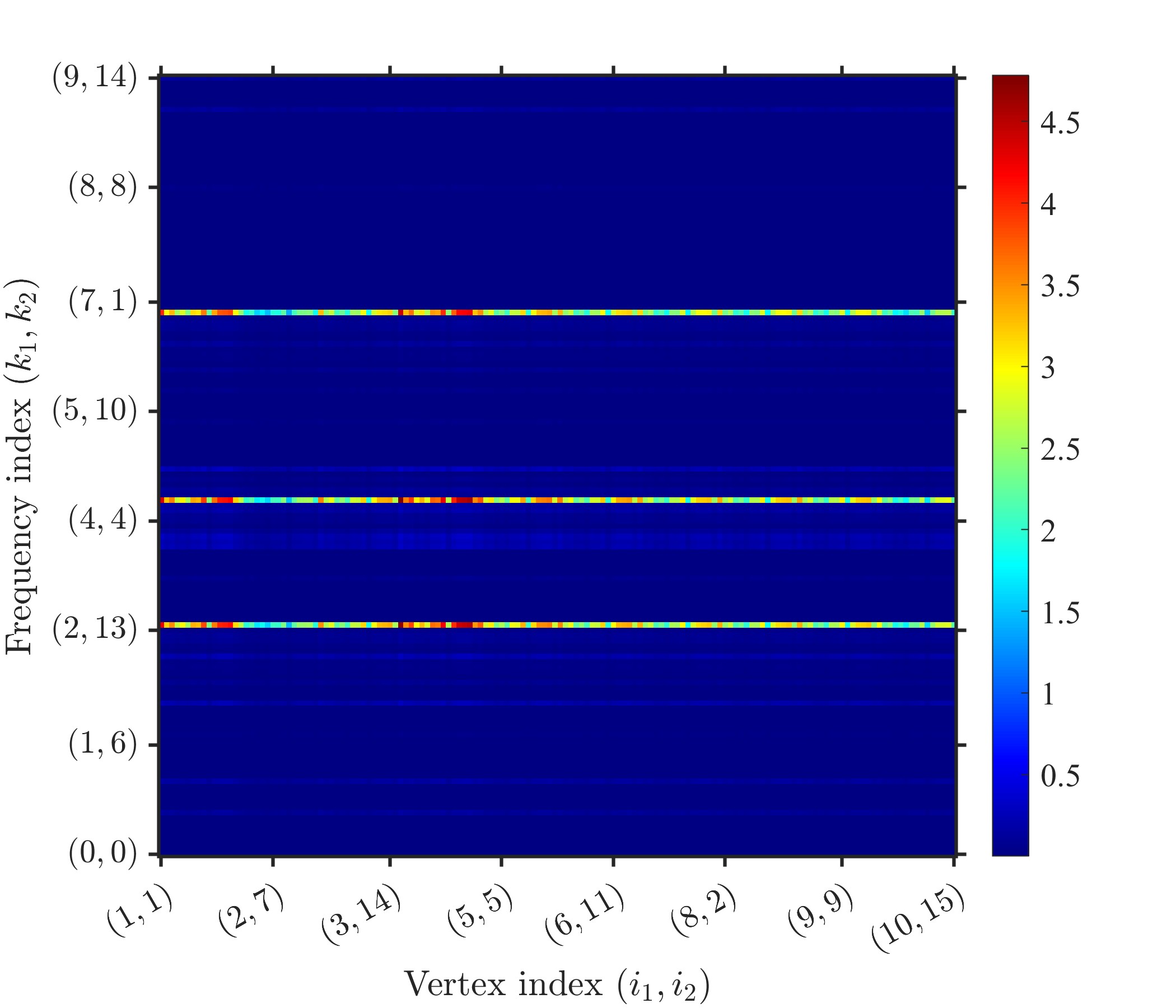}}\hfill
	\subfigure[F2D-MWGFRFT of $f_4$]{\includegraphics[width=0.32\textwidth]{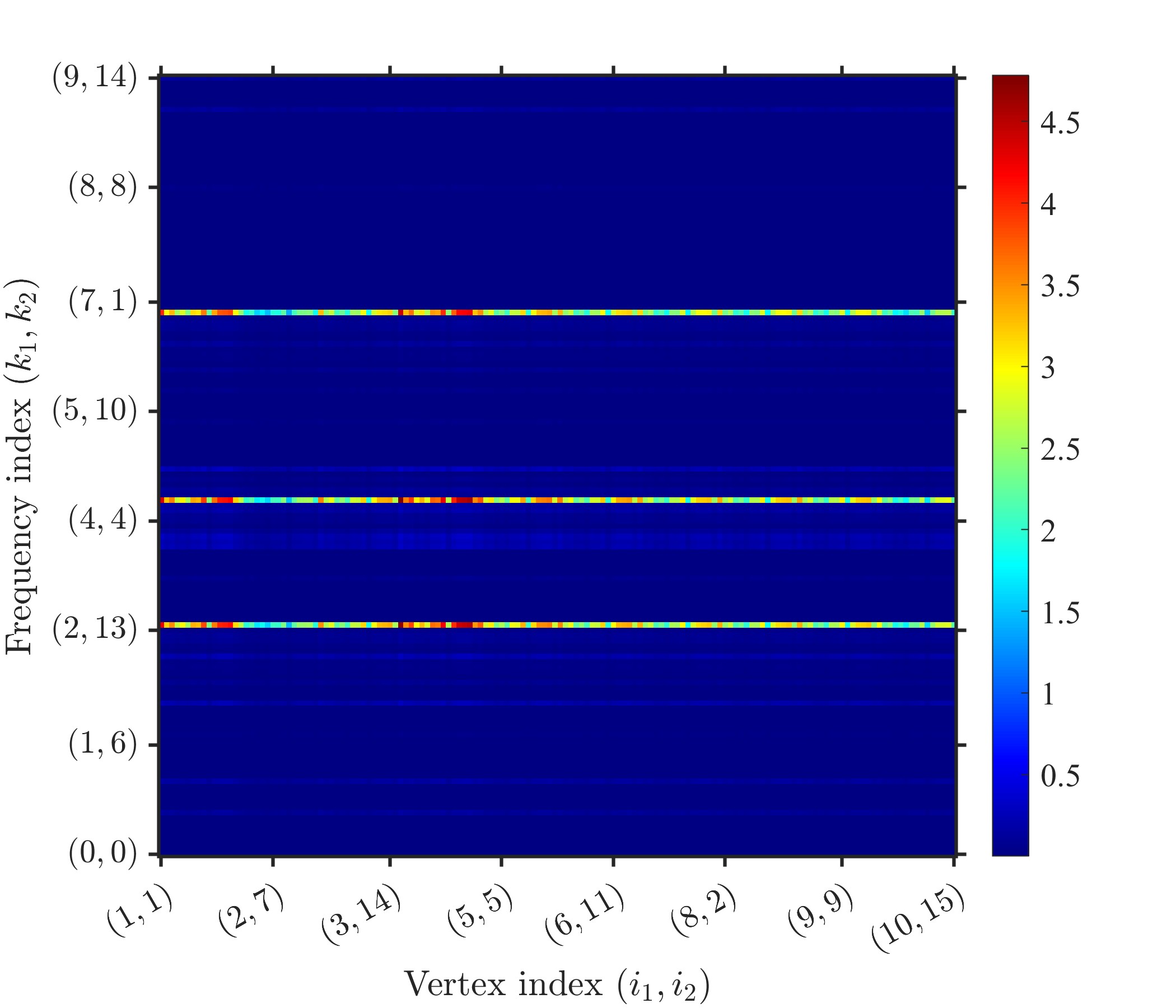}}\\
	\caption{Vertex-frequency representations of signals $f_2$, $f_3$, and $f_4$ by 2D-MWGFRFT and F2D-MWGFRFT, with $\alpha = 0.9$.}
	\label{2dfig103}
\end{figure*}

\subsection{The influence of parameter $L$ on F2D-MWGFRFT}

We investigate the impact of the window count $L$ on the vertex-frequency representation using a product path graph $\mathcal{G} = \mathcal{G}_1 \square \mathcal{G}_2$ ($N_1=12, N_2=10, N=120$). A sparse test signal $f_5$ with an impulse at $f_5(50)=1$ is used to evaluate energy concentration. The window functions $\left\{ {g}_l \right\}_{l=1}^L$ are generated using a Gaussian kernel:
\begin{equation*}
	\widehat{g}_l^{\alpha}(\lambda) = \frac{1}{\sqrt{2\pi}} e^{-\tau_l \lambda^2}, \quad \|\widehat{g}_l^{\alpha}\|_2 = 1
\end{equation*}
Fig. \ref{2dfig104} compares spectrograms for $L \in \{1, 2, 20\}$ at $\alpha = 0.6$.

\begin{figure*}[!t]
	\centering
	\subfigure[a Cartesian product graph signal $f_5$] {\includegraphics[width=0.33\textwidth]{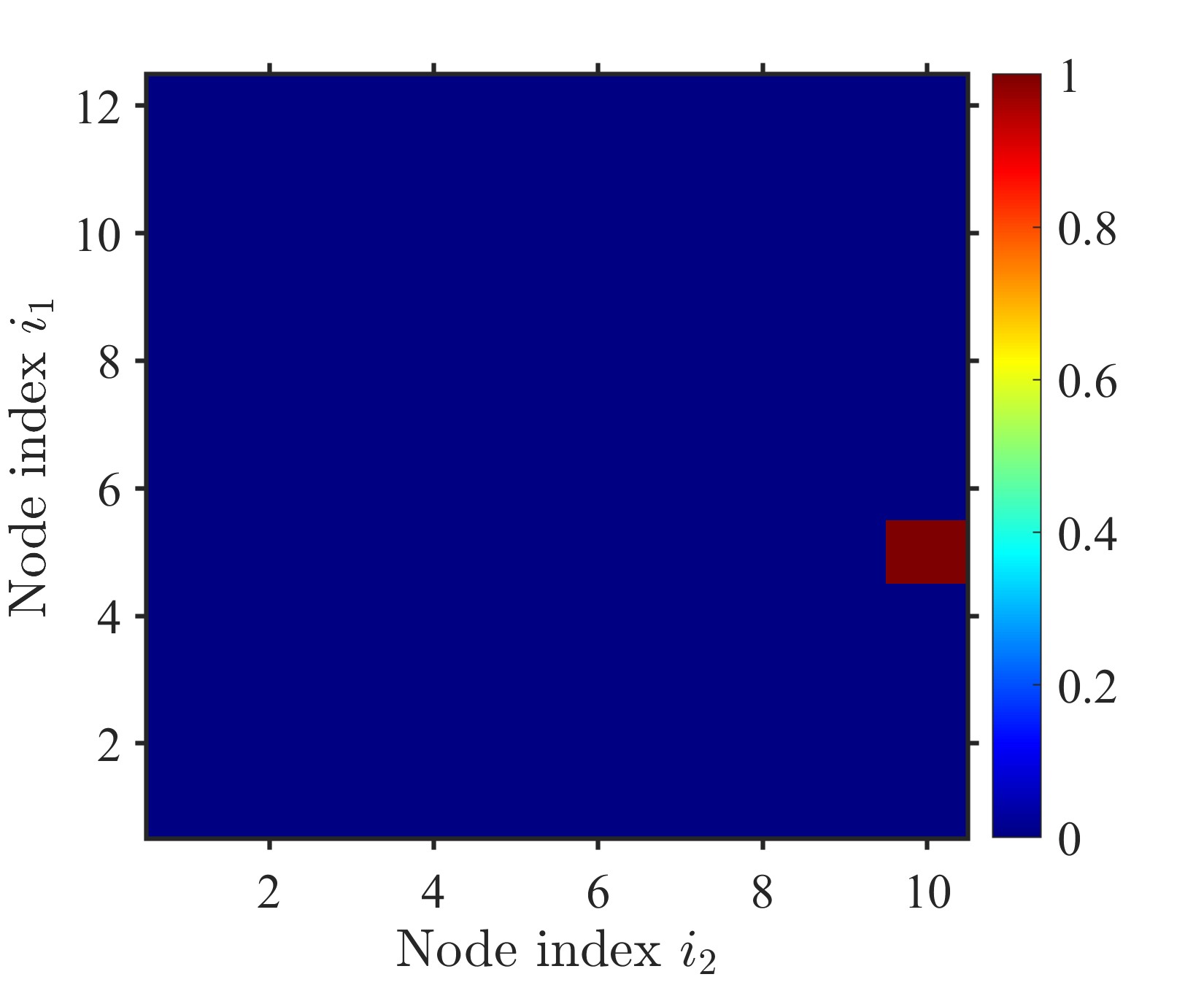}}\hspace{6pt}
	\subfigure[2D-SGFRFT of $f_5$] {\includegraphics[width=0.33\textwidth]{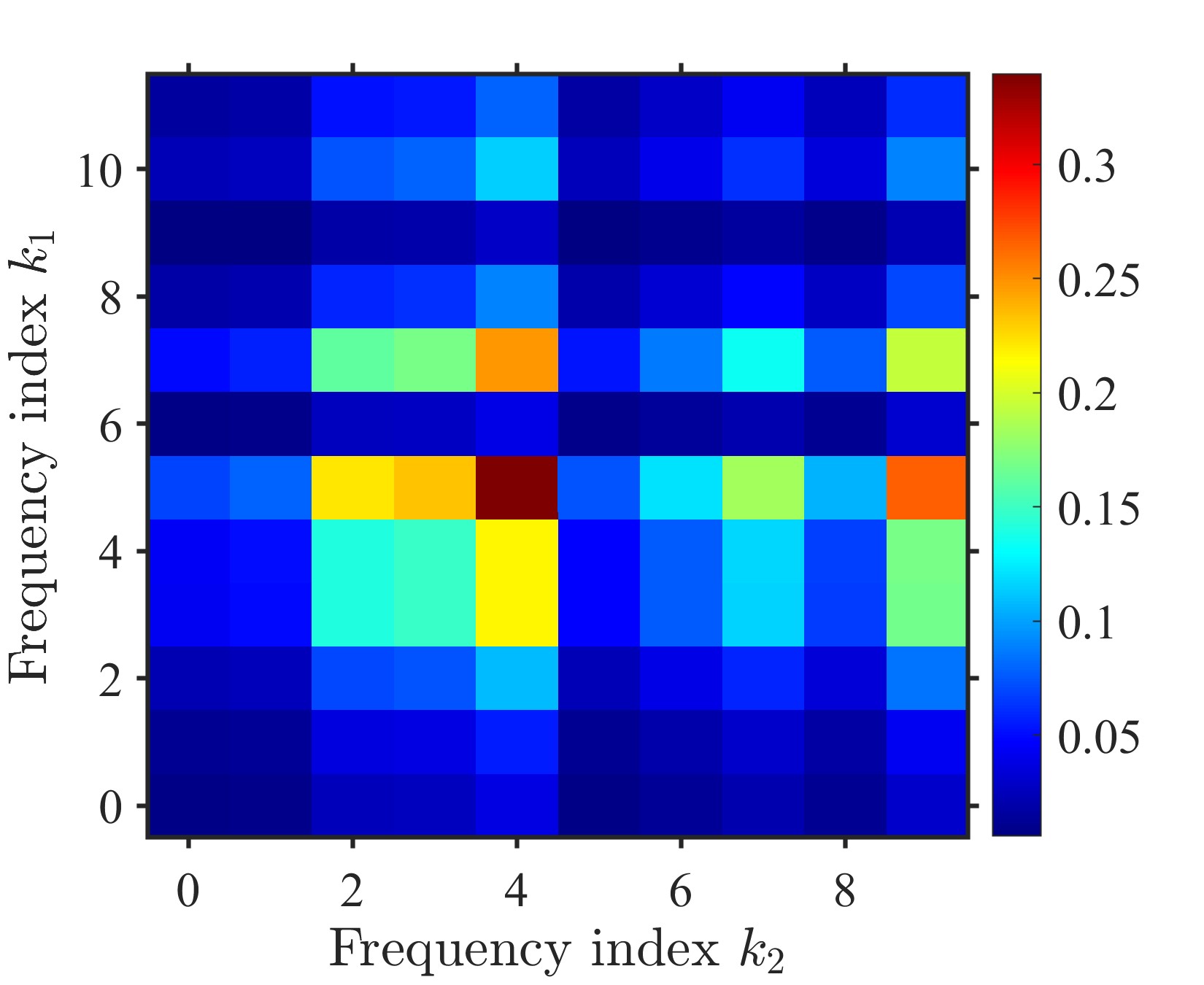}}\hfill \\
	\subfigure[F2D-MWGFRFT of $L=1$] {\includegraphics[width=0.32\textwidth]{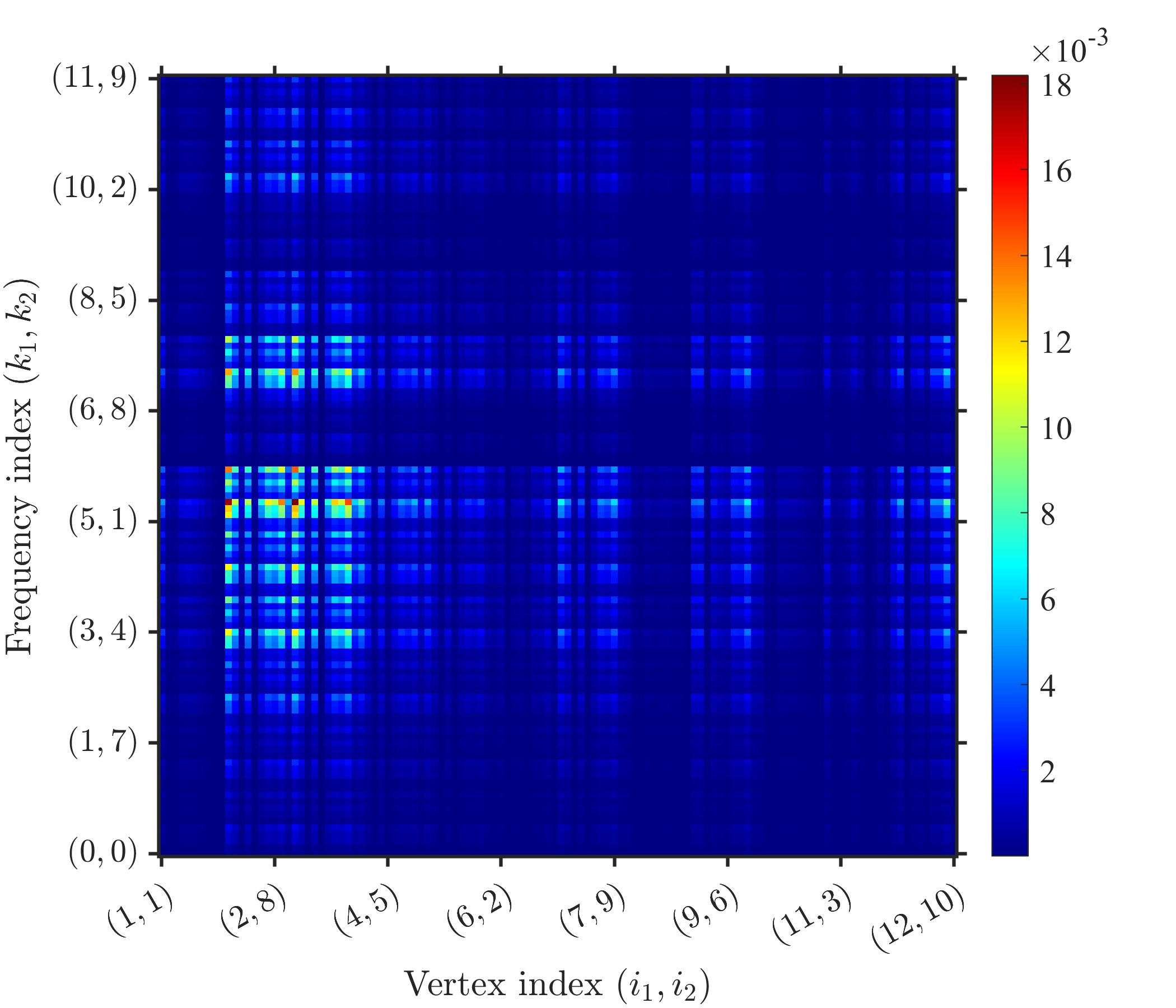}}\hfill     
	\subfigure[F2D-MWGFRFT of $L=2$] {\includegraphics[width=0.32\textwidth]{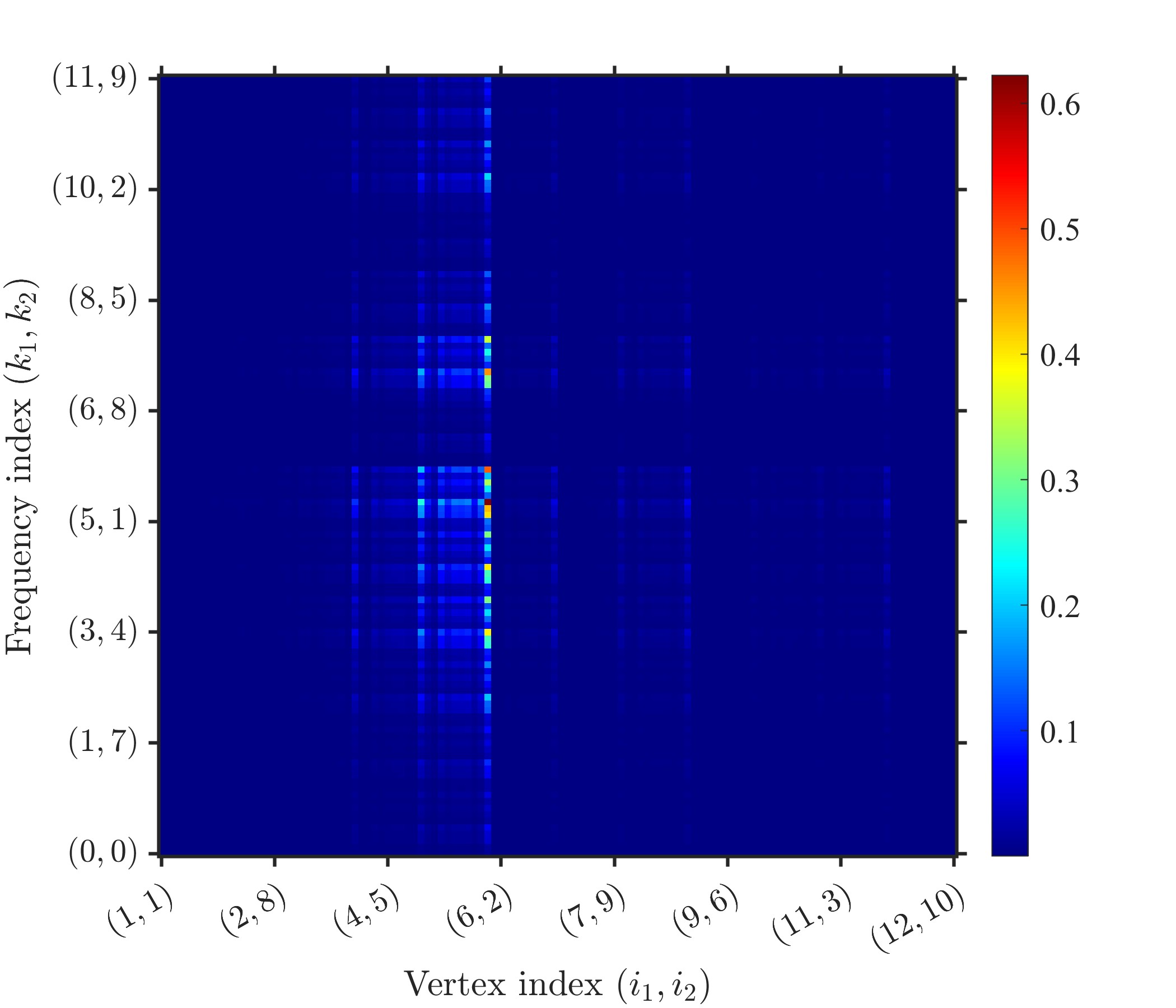}}\hfill
	\subfigure[F2D-MWGFRFT of $L=20$] {\includegraphics[width=0.32\textwidth]{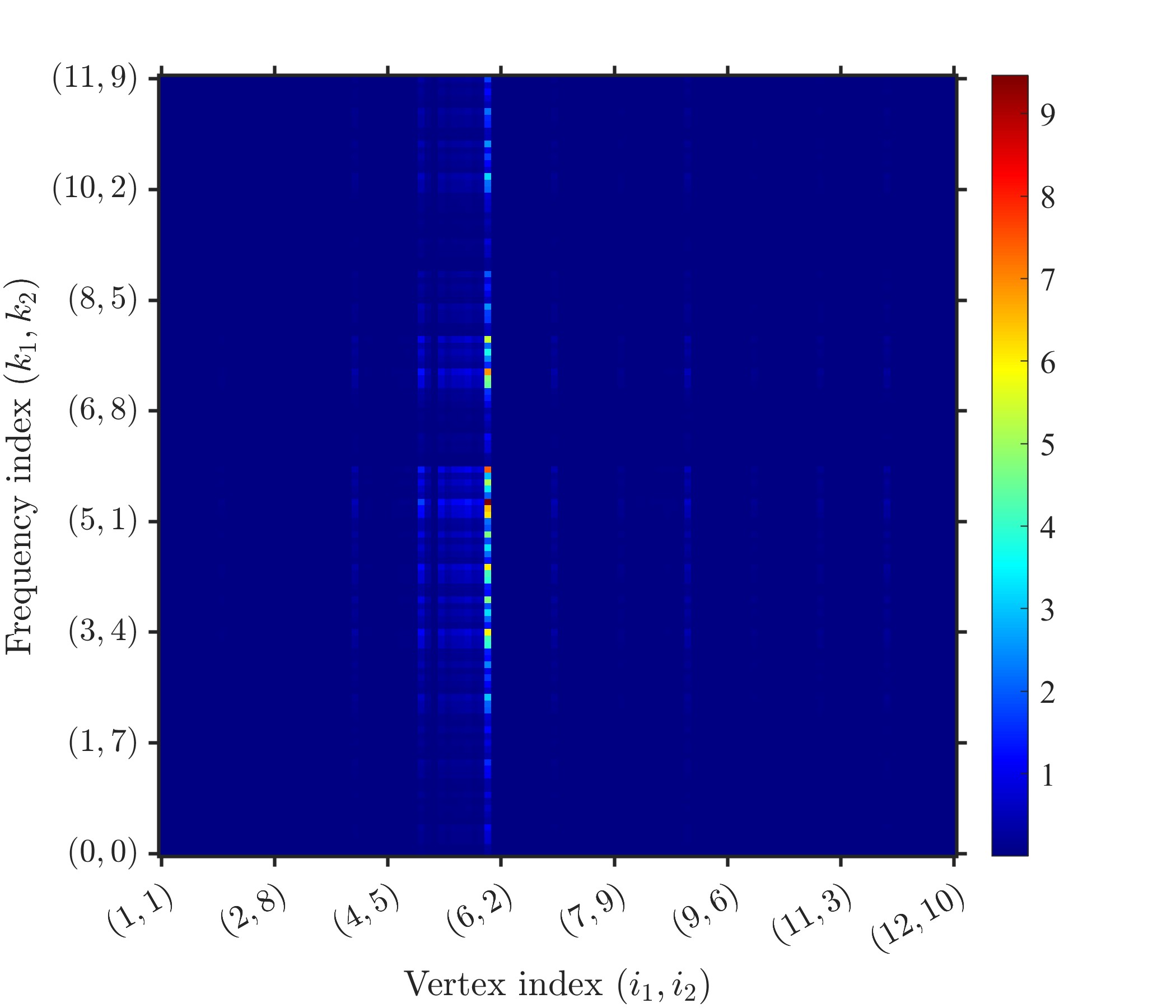}}  \hfill  
	\caption{Vertex-frequency representations of the signal $f_5$ by F2D-MWGFRFT with different $L$.}
	\label{2dfig104}
\end{figure*}

\paragraph{Vertex Localization and Energy Concentration}
The parameter $L$ serves as the primary control for vertex resolution. For $L=1$ (standard 2D-WGFRFT), the energy distribution is relatively dispersed. As $L$ increases to 2, the F2D-MWGFRFT achieves a more compact representation, with energy significantly concentrated around the target vertex index $(i_1,i_2)=(5,10)$. This suggests that the multi-window approach effectively captures local structural features that a single window fails to resolve.

\paragraph{Computational Trade-offs}
Increasing the window count to $L=20$ yields diminishing returns. While the vertex resolution is marginally refined, the gain is less pronounced than the transition from $L=1$ to $L=2$. Since the computational complexity scales linearly with $L$, a high value of $L$ introduces redundant information and overhead without proportional enhancement in representation sparsity.

\paragraph{Summary of Findings}
The experimental results demonstrate that:
\begin{itemize}
	\item \textbf{Low $L$ ($L=1$):} Provides insufficient degrees of freedom, leading to "blurred" vertex-frequency representation and low resolution.
	\item \textbf{Moderate $L$ ($L=2$):} Offers an optimal balance, significantly enhancing vertex localization and yielding a sparse, informative representation of the signal's energy.
	\item \textbf{High $L$ ($L=20$):} Introduces redundant information and increases the computational overhead ($O(L \cdot \text{complexity})$) with marginal improvements in resolution.
\end{itemize}

Thus, for practical applications involving large-scale graph datasets, a moderate value of $L$ (e.g., $L=2$) is recommended to achieve high-fidelity feature extraction while maintaining computational efficiency.

\subsection{Comparative analysis of vertex-frequency feature extraction among 
	F2D-WGFRFT \cite{GAN2025105191} and F2D-MWGFRFT}

To evaluate vertex-frequency localization, we utilize a $12 \times 12$ Cartesian product path graph ($N=144$) with $\alpha = 0.7$. We define signals $f_6$ and $f_7$ with sparse impulses to simulate localized events.

As shown in Fig. \ref{2dfig105}, we define a signal $f_6 \in \mathbb{R}^{N}$ characterized by sparse impulses in the vertex domain to simulate localized "events." Specifically, we set $f_6(5,4) = 1$ and $f_6(7,11) = 1$, representing two distinct point sources on the graph. For the transformation, we employ a heat diffusion kernel as the spectral window function, defined as $\widehat{g}^{\alpha}(\lambda_\ell) = e^{-\tau\lambda_\ell}$, where the energy is normalized such that $\|\widehat{g}^{\alpha}\|_2 = 1$.
As shown in Fig. \ref{2dfig106}, we define a signal $f_7$, representing four distinct point sources on the graph. 
The window functions $\left\{ \mathbf{g}_l \right\}_{l=1}^L$ are generated using a Gaussian kernel in the fractional frequency domain:
\begin{equation*}
	\widehat{g}_l^{\alpha}(\lambda) = \frac{1}{\sqrt{2\pi}} e^{-\tau_l \lambda^2}, \quad \|\widehat{g}_l^{\alpha}\|_2 = 1,
\end{equation*}
where $\tau_l$ denotes the scale parameter.
\begin{figure*}[htbp]
	\centering
	\subfigure[a graph signal $f_6$] {\includegraphics[width=0.245\textwidth]{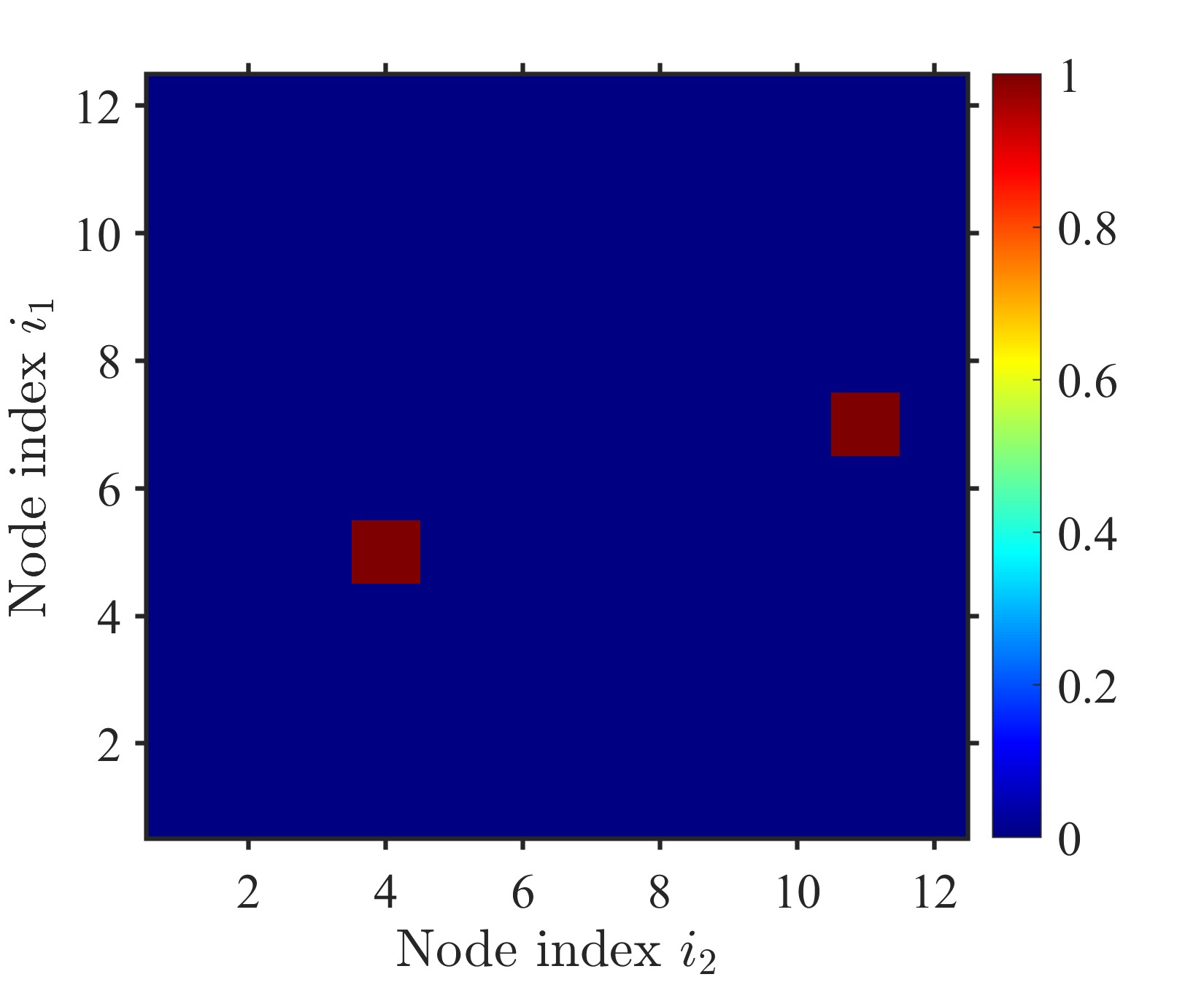}} \hfill
	\subfigure[2D-SGFRFT of $f_6$]
	{\includegraphics[width=0.245\textwidth]{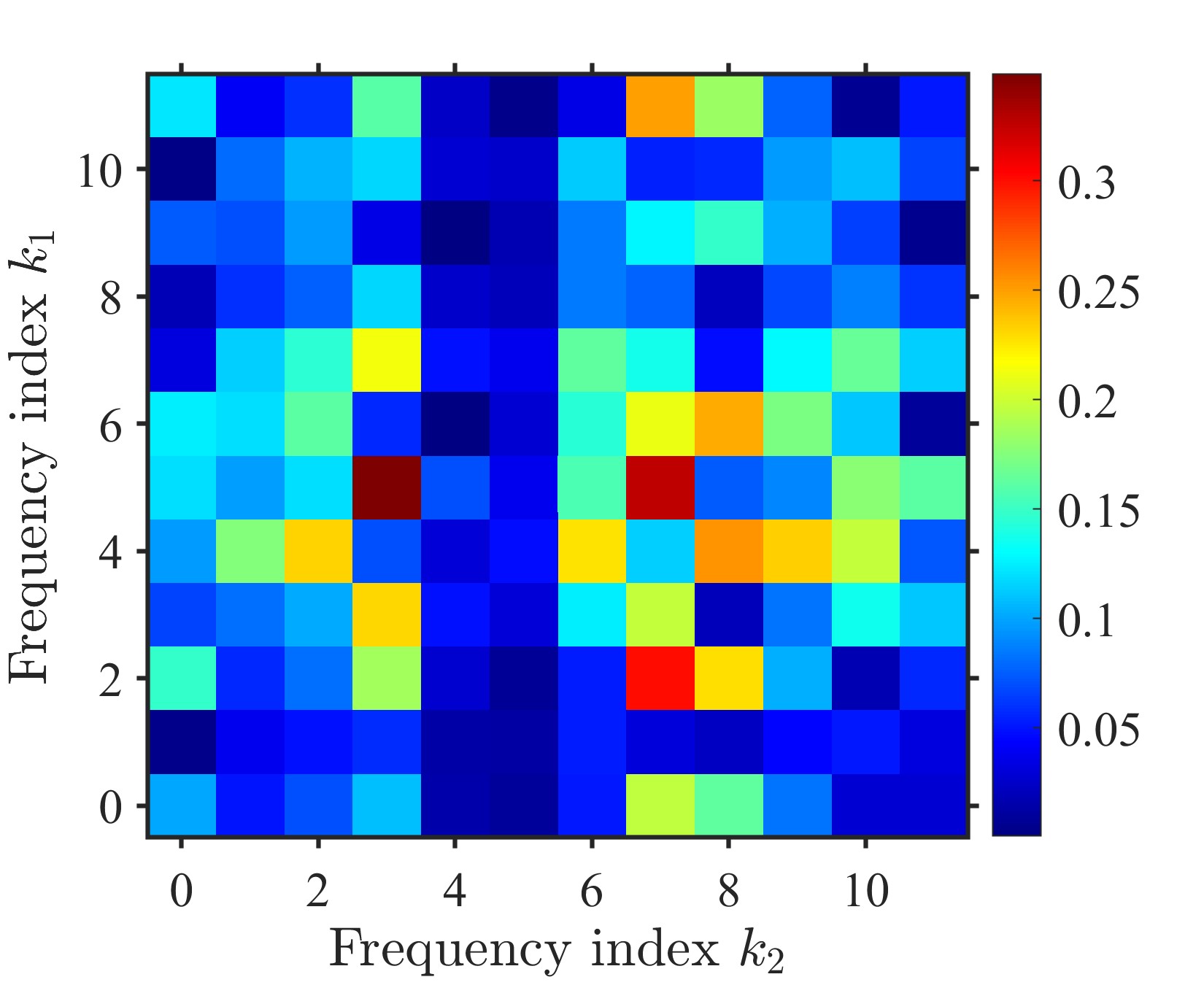}} \hfill
	\subfigure[F2D-WGFRFT of $f_6$
	]{\includegraphics[width=0.245\textwidth]{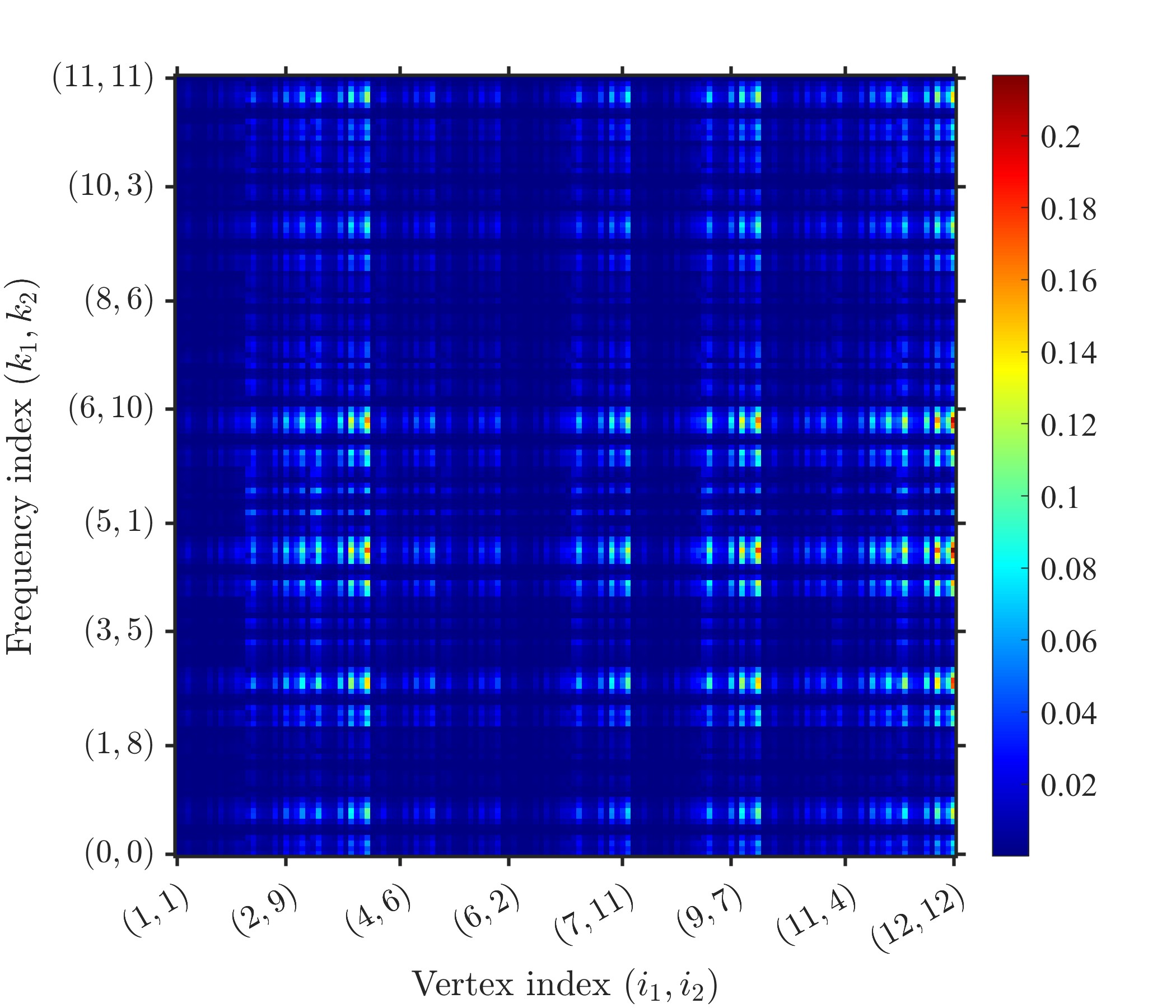}} \hfill
	\subfigure[F2D-MWGFRFT of $f_6$ ]{\includegraphics[width=0.245\textwidth]{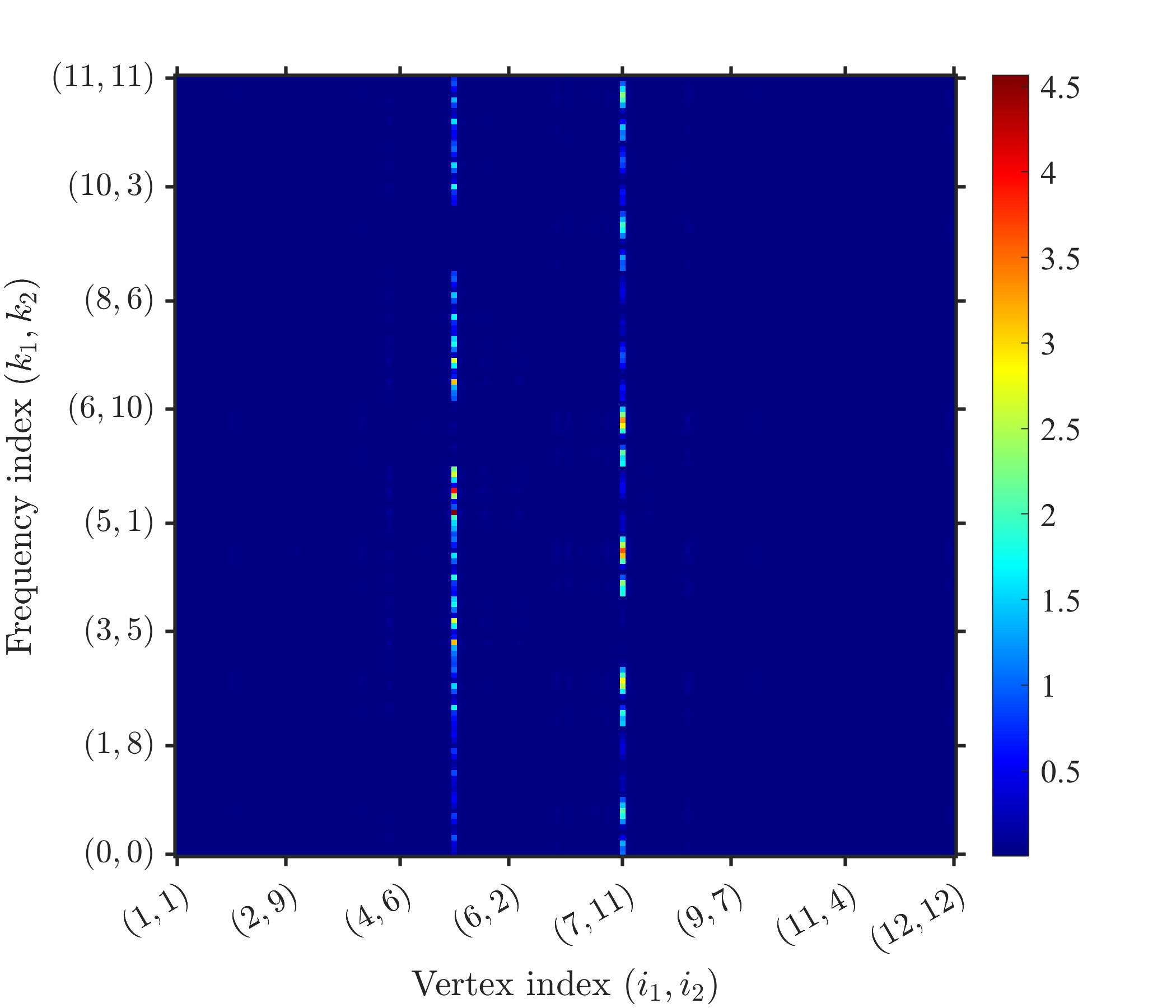}} \hfill
	\caption{Vertex-frequency representations of the signal $f_6$ by F2D-WGFRFT and F2D-MWGFRFT.}
	\label{2dfig105}
\end{figure*}
\begin{figure*}[htbp]
	\centering
	\subfigure[a graph signal $f_7$] {\includegraphics[width=0.245\textwidth]{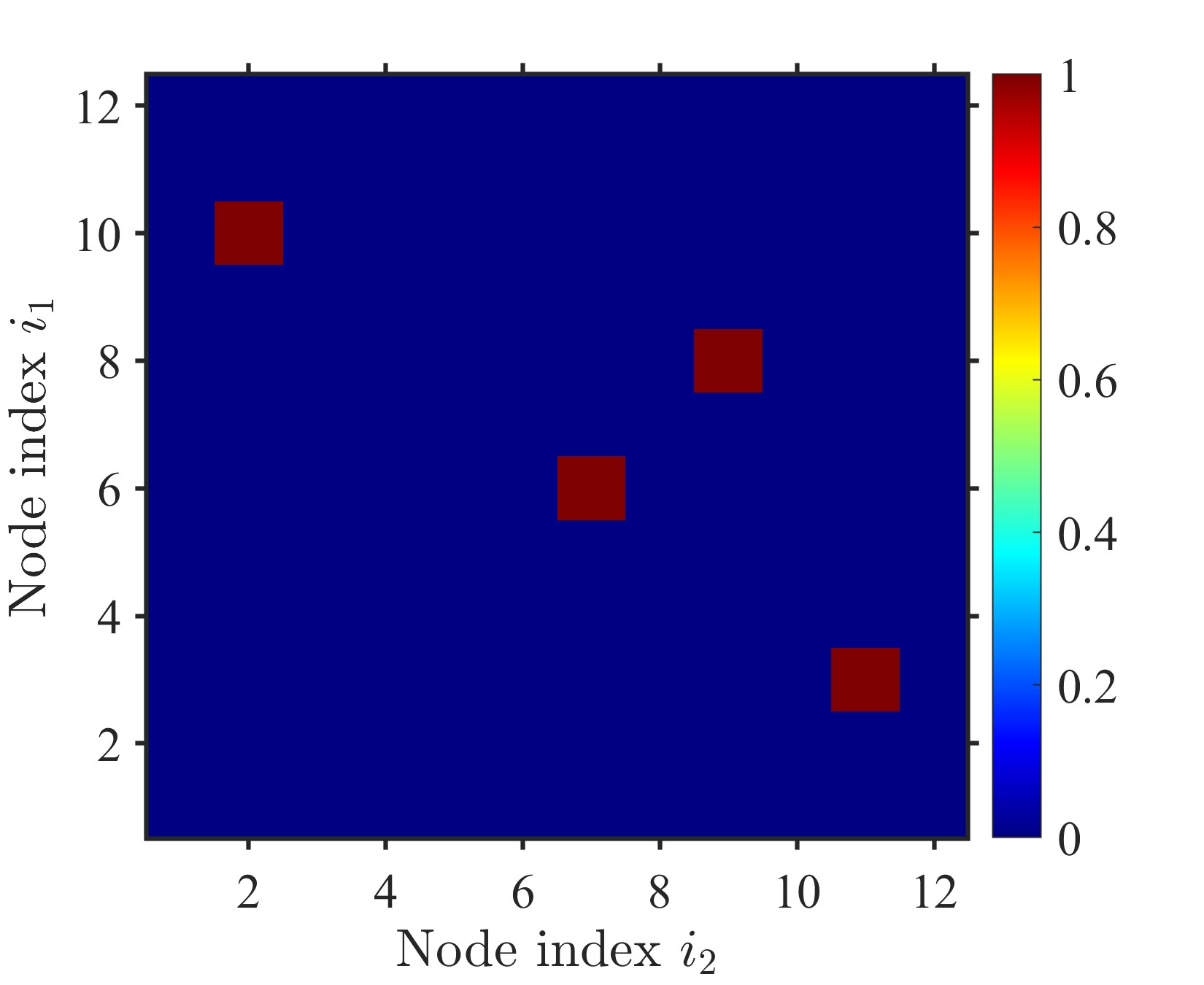}} \hfill
	\subfigure[2D-SGFRFT of $f_7$]
	{\includegraphics[width=0.245\textwidth]{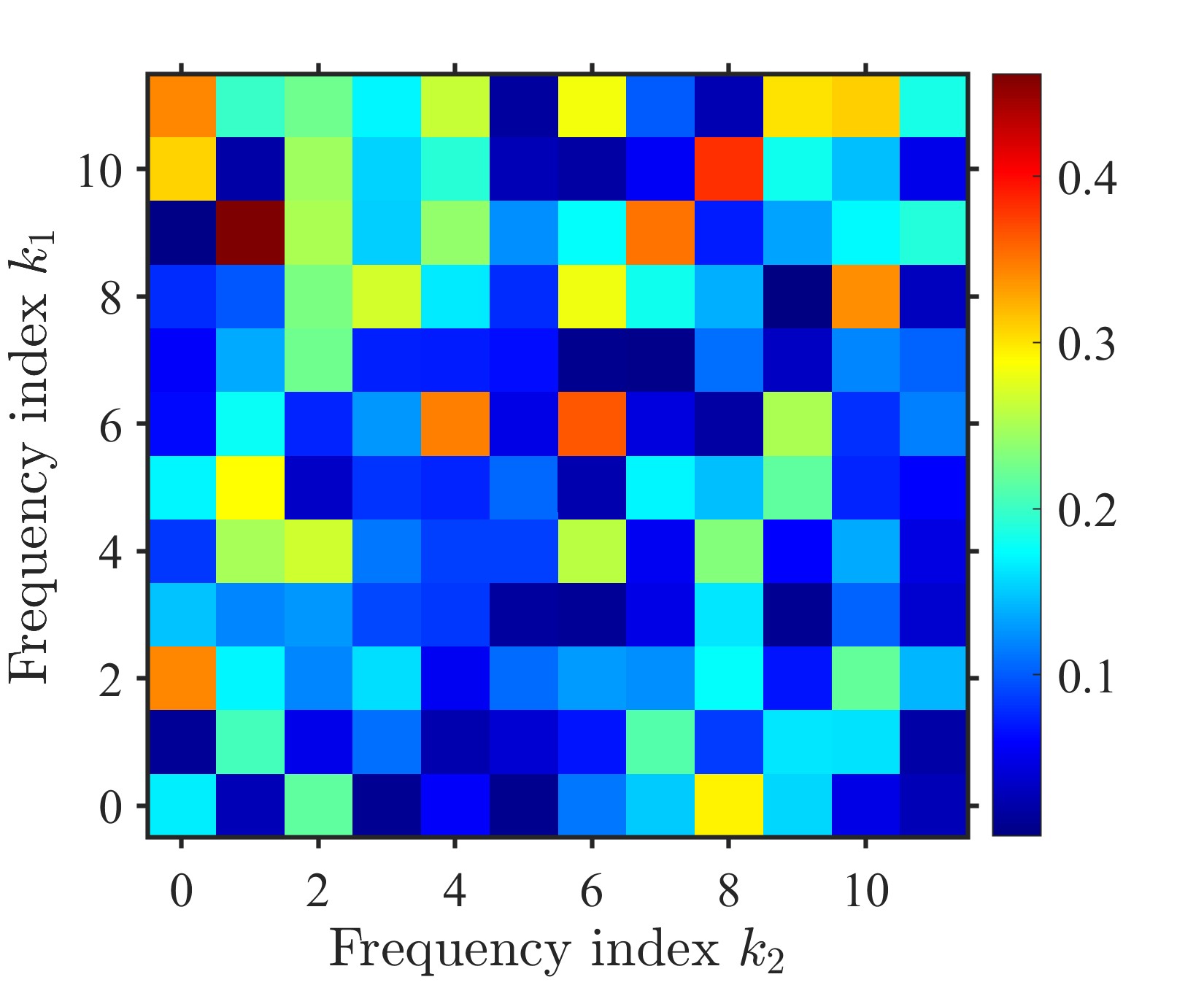}} \hfill
	\subfigure[F2D-WGFRFT of $f_7$
	]{\includegraphics[width=0.245\textwidth]{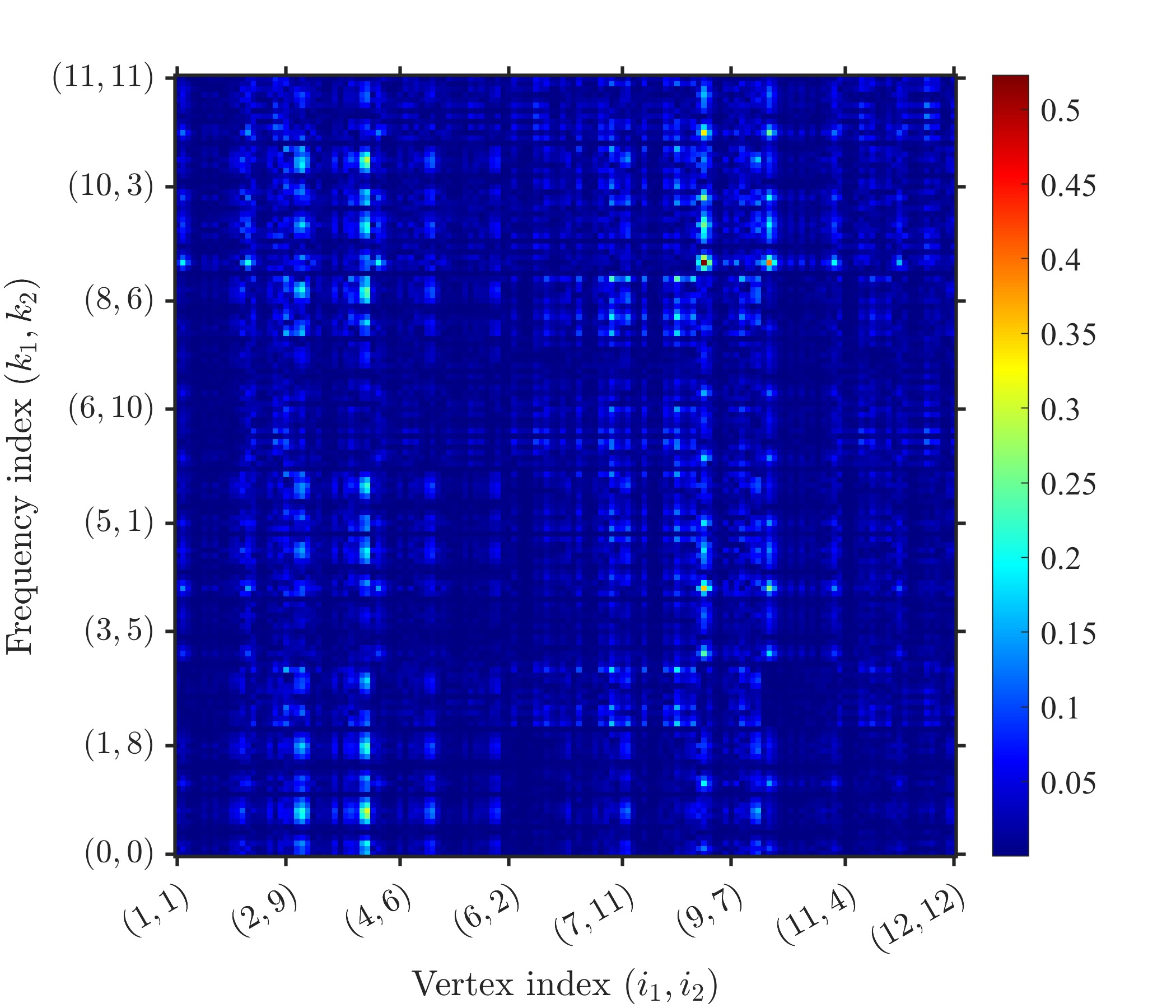}} \hfill
	\subfigure[F2D-MWGFRFT of $f_7$ ]{\includegraphics[width=0.245\textwidth]{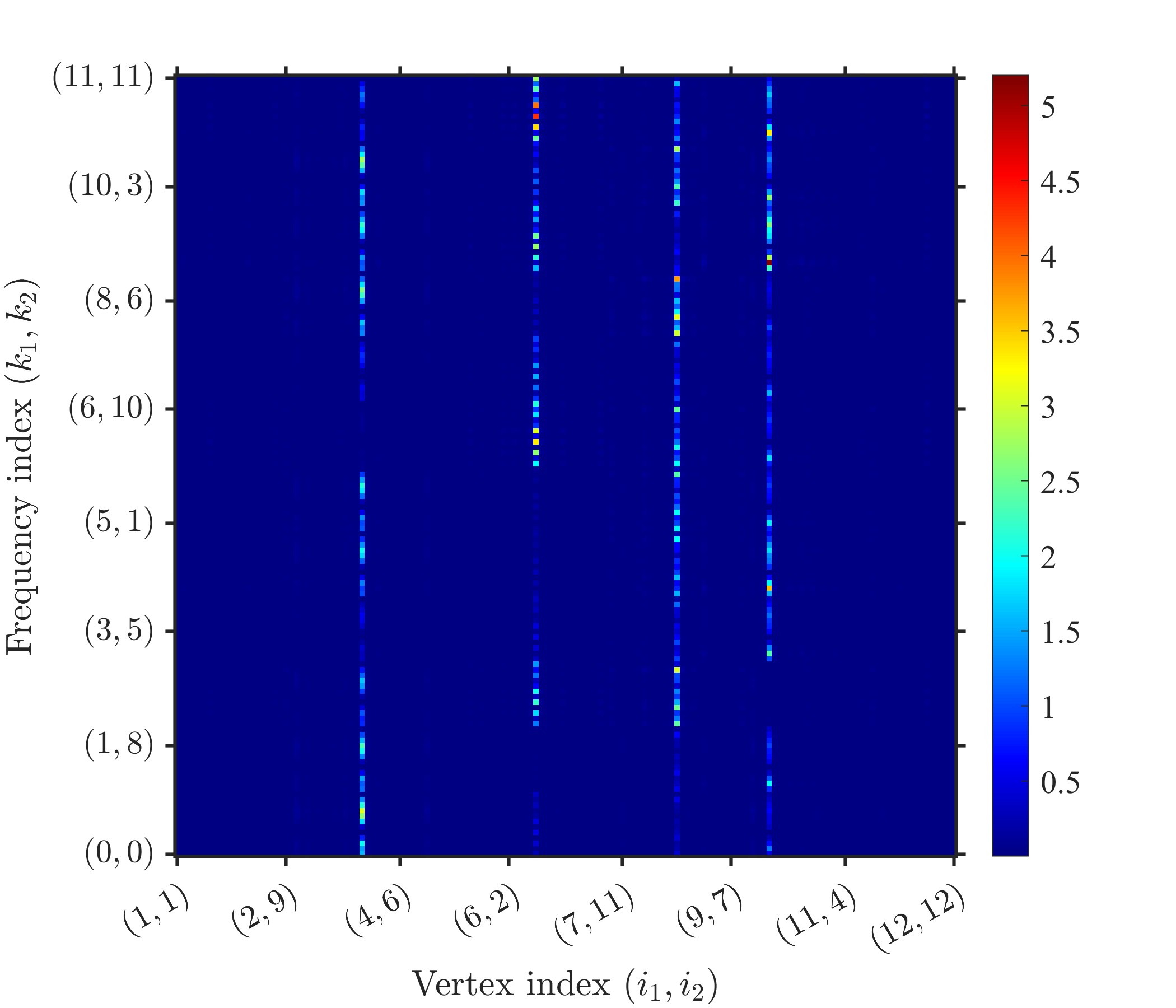}} \hfill
	\caption{Vertex-frequency representations of the signal $f_7$ by F2D-WGFRFT and F2D-MWGFRFT.}
	\label{2dfig106}
\end{figure*}

Fig. \ref{2dfig105} (c) and Fig. \ref{2dfig106} (c) represent the standard F2D-WGFRFT. As shown in the resulting vertex-frequency representations, the single window imposes a fixed resolution trade-off. While the transform identifies the presence of spectral components, the energy remains relatively blurred across the vertex-frequency plane. The single-scale fails to adapt to the local variations of the signal, leading to a loss in precision for both vertex localization and fractional frequency resolution.

By leveraging multiple spectral windows, as shown in Fig. \ref{2dfig105} (d) and Fig. \ref{2dfig106} (d), the F2D-MWGFRFT provides a multiresolution tiling of the vertex-frequency domain. The results indicate that the energy of the signal $f_6$ and $f_7$ is significantly more concentrated. The impulse components are clearly distinguishable as high-intensity "red points" in the spectrogram. Compared to the single-window results, the multiresolution approach accurately captures the sharp transitions in the vertex domain while maintaining high fidelity in the fractional frequency domain.

Ultimately, we conclude that the F2D-MWGFRFT provides a more granular and accurate extraction of vertex-frequency features. The ability to distinguish localized graph signals in the $(\text{vertex}, \text{fractional-frequency})$ plane validates the theoretical advantages of the proposed multiresolution tiling strategy over traditional single-window methods.

\section{Application: anomaly detection on Cartesian product graphs}\label{section5}

To further evaluate the practical performance of the proposed fast two-dimensional multi-window graph fractional Fourier transform (F2D-MWGFRFT), we apply it to anomaly detection on a Cartesian product graph. In this experiment, a Cartesian product graph $\mathcal{G} = \mathcal{G}_1 \square \mathcal{G}_2$ is constructed from two path graphs, each with $N_1=12$ and $N_2=12$ vertices, resulting in a total of $N = 144$ vertices.

We conduct a comparative study between the F2D-WGFRFT \cite{GAN2025105191} and the proposed F2D-MWGFRFT. The spectrogram coefficient matrix $\mathbf{S}$ is computed for both methods. To effectively distinguish anomalies from the background signal, an adaptive threshold $\delta$ is defined based on the maximum coefficient:
\begin{equation*}
	\delta = \frac{1}{2} \max_{i \in \mathcal{V}} \{ \mathbf{S}_i \},
\end{equation*}
where $\mathcal{V}$ denotes the vertex set of the Cartesian product graph. Vertices whose maximum spectrogram coefficients exceed $\delta$ are classified as anomalous.

The experimental results are illustrated in Fig. \ref{2dfig107}. As shown in Fig. \ref{2dfig107}(a), the F2D-WGFRFT spectrogram provides only coarse localization and suffers from limited resolution within the product graph domain. In contrast, the F2D-MWGFRFT (Fig. \ref{2dfig107}(b)) utilizes a bank of five Gaussian windows to facilitate a multiscale analysis, which captures the localized features of the anomalies with significantly higher precision.

\begin{figure*}[htbp]
	\centering
	\subfigure[Spectrogram of F2D-WGFRFT ]{\includegraphics[width=0.32\textwidth]{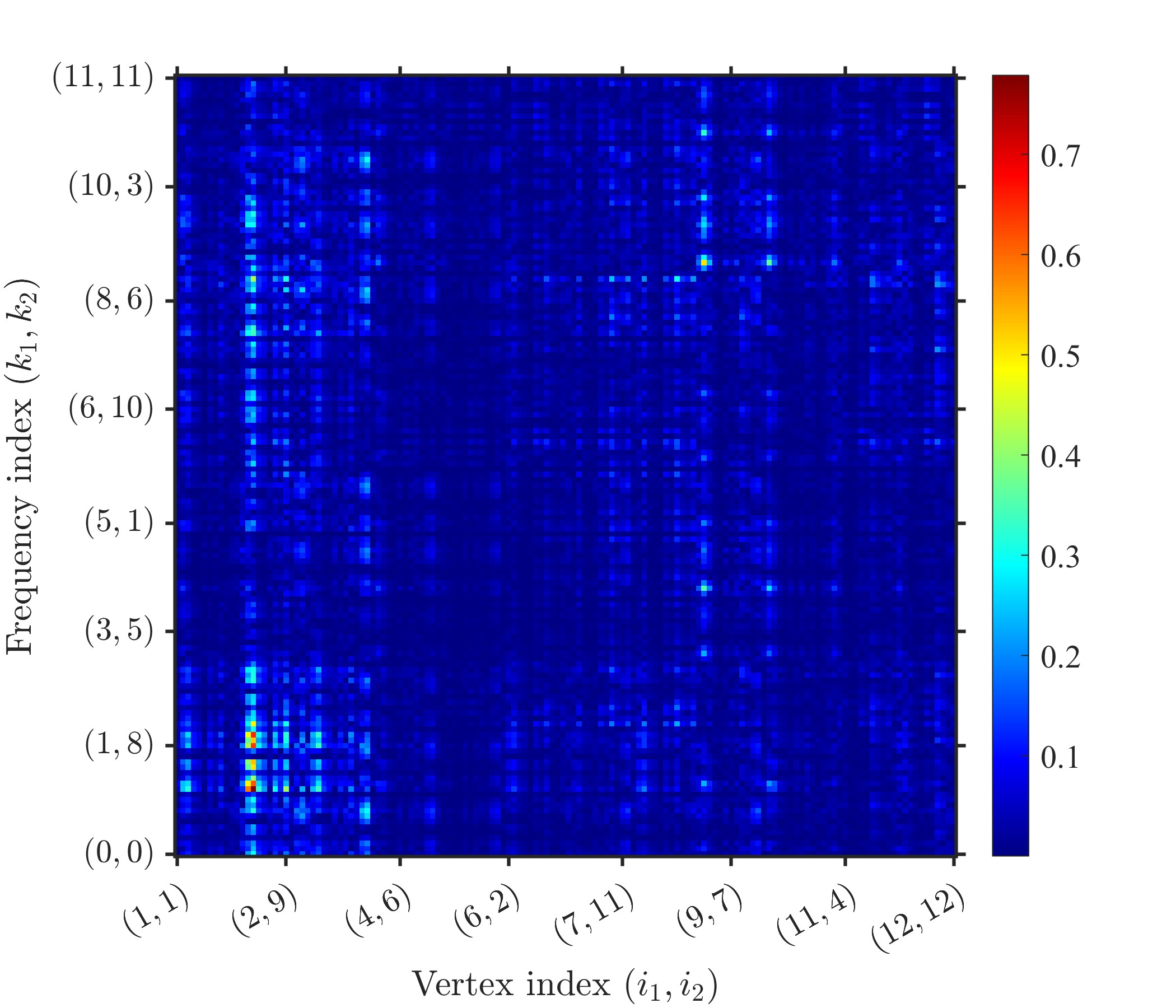}} \hfill
	\subfigure[Spectrogram of F2D-MWGFRFT ]{\includegraphics[width=0.32\textwidth]{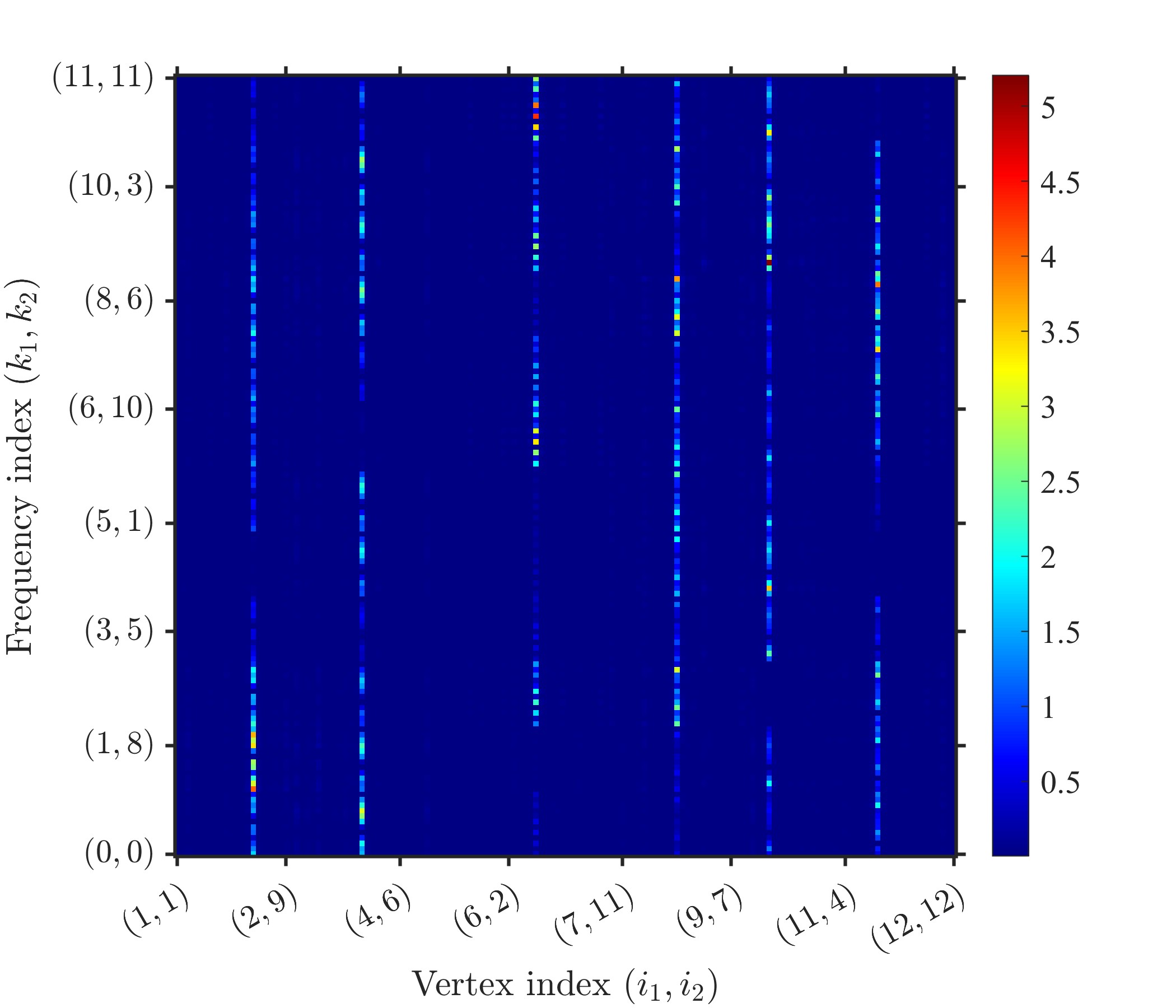}} \hfill
	\subfigure[Detected anomalous vertices of $f_8$] {\includegraphics[width=0.32\textwidth]{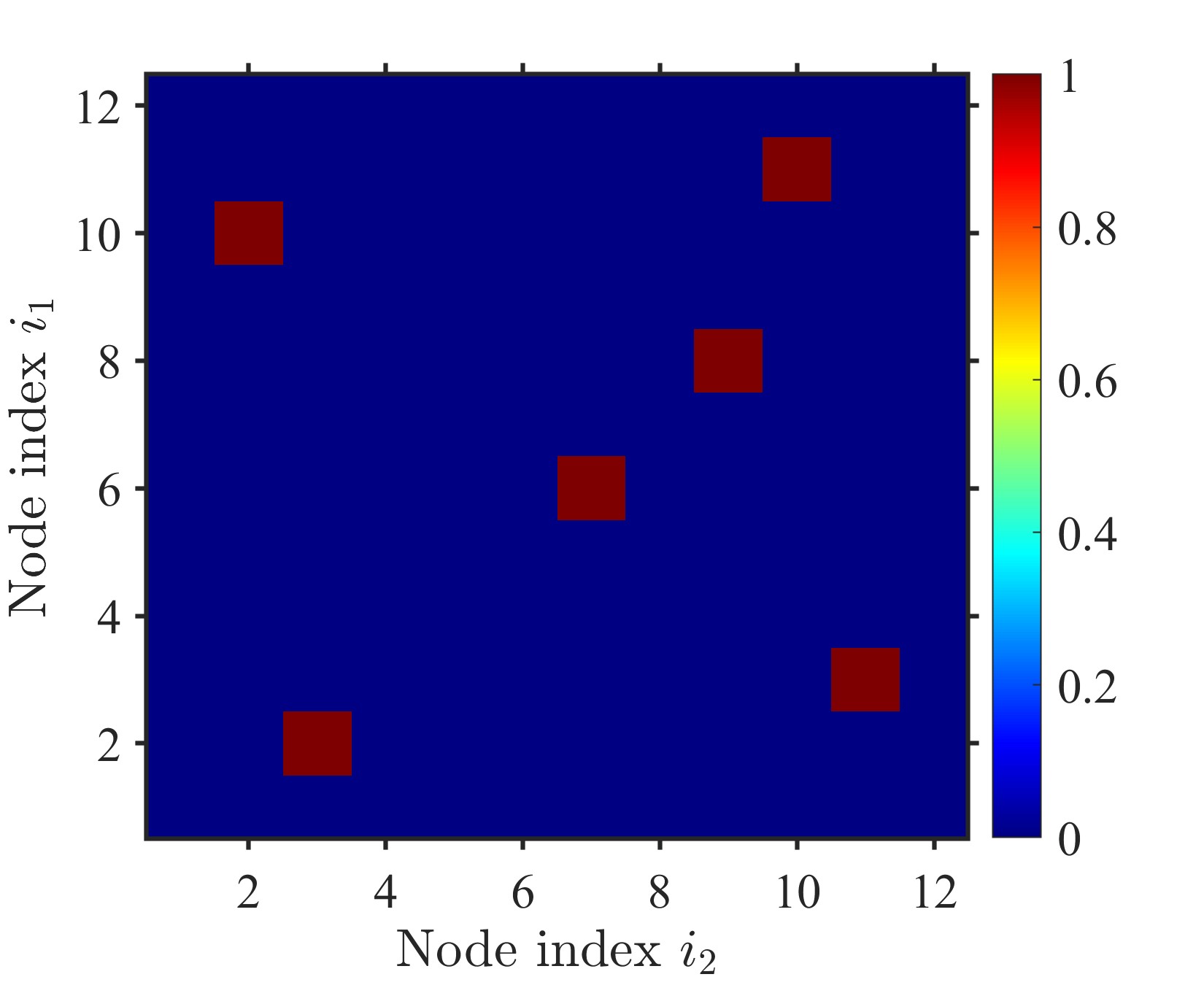}} \hfill
	\caption{Anomaly detection on a $12 \times 12$ Cartesian product path graph: (a) Detection result using a single window (F2D-WGFRFT); (b) Detection result using multiscale windows (F2D-MWGFRFT); (c) Localized anomalous vertices identified by F2D-MWGFRFT with $\alpha = 0.7$.}
	\label{2dfig107}
\end{figure*}

As demonstrated in Fig. \ref{2dfig107}(b), the F2D-MWGFRFT successfully identifies all six anomalous vertices without any false positives. This highlights that by leveraging the multiscale properties of the multi-window framework and the fractional order $\alpha = 0.7$, the proposed method achieves superior energy concentration and detection accuracy compared to single-scale approaches. Specifically, as depicted in Fig. \ref{2dfig107}(c), the F2D-MWGFRFT accurately detects six anomalous vertices at indices $\{(2,3), (3,11), (6,7), (8,9), (10,2), (11,10)\}$. Furthermore, the fast algorithm significantly reduces computational overhead, underscoring its suitability for large-scale product graphs.

\section{Conclusion}\label{section6}

In this paper, we proposed a novel two-dimensional multi-window graph fractional Fourier transform (2D-MWGFRFT) for the analysis of graph signals defined on Cartesian product graphs. By integrating multi-window strategies with fractional spectral operators, the proposed framework enables enhanced vertex-frequency localization and provides a flexible multiresolution representation of graph signals.

We first established the theoretical foundation of the proposed transform by constructing the two-dimensional multi-window graph fractional Fourier frame and deriving the corresponding forward and inverse transforms. This framework ensures stable signal representation and perfect reconstruction. 
To address the computational challenges associated with high-dimensional graph transforms, we further developed an efficient fast algorithm (F2D-MWGFRFT). By exploiting the separable structure of Cartesian product graphs and reformulating the computation in the spectral domain, the proposed method significantly reduces computational complexity, making it suitable for large-scale applications.
Extensive experimental results demonstrated that the proposed method outperforms conventional one-dimensional and single-window approaches in terms of spectral clarity, energy concentration, and vertex-frequency resolution. Moreover, the application to anomaly detection verified the practical effectiveness of the proposed framework in identifying localized features on complex graph structures.

Future work will focus on extending the proposed framework to more general graph products, as well as exploring its applications in deep learning, graph-based signal classification, and large-scale network analysis.

\appendix

\section{Proof of Theorem \ref{th2d02}}\label{2datheorem01}

For any $f \in \mathbb{C}^{N_1 \times N_2}$, we have:
\begin{equation}\begin{aligned}\label{eq2dmw04}
		\sum_{l=1}^L 
		\sum_{i_1=1}^{N_1} \sum_{i_2=1}^{N_2}
		\sum_{k_1=0}^{N_1-1} \sum_{k_2=0}^{N_2-1} 
		|\langle f, g_{i_1,i_2;k_1,k_2}^{(l\alpha)} \rangle|^2 
		&=(N_{1}N_{2})^{\alpha} 
		\sum_{l=1}^L 
		\sum_{i_1=1}^{N_1} \sum_{i_2=1}^{N_2}
		\sum_{k_1=0}^{N_1-1} \sum_{k_2=0}^{N_2-1} 
		\left|\langle f \circ(T_{i_1,i_2}^{\alpha}g_{l})^{*},
		\gamma_{k_1}\gamma_{k_2}\rangle\right|^{2}\\
		&=(N_{1}N_{2})^{\alpha} 
		\sum_{l=1}^L 
		\sum_{i_1=1}^{N_1} \sum_{i_2=1}^{N_2}
		\sum_{n_1=1}^{N_1} \sum_{n_2=1}^{N_2} 
		|f(n)|^{2}|(T_{n_1,n_2}^{\alpha}g_{l})(i_1,i_2)|^{2}\\
		&=(N_{1}N_{2})^{\alpha} 
		\sum_{n_1=1}^{N_1} \sum_{n_2=1}^{N_2} 
		|f(n)|^{2}
		\sum_{l=1}^L |(T_{n_1,n_2}^{\alpha}g_{l})|^{2}
		,\end{aligned}\end{equation}
where \eqref{eq2dmw04} is derived via the Parseval relation, the symmetry of $\mathcal{L}^{(\alpha)}$, and the definition of the operator $T_{i_1,i_2}^{\alpha}.$ 
Crucially, if $\sum_{l=1}^L |\widehat{g}_l^\alpha(0,0)|^2 \neq 0$, then:
$$\sum_{l=1}^L \|T_{n_1,n_2}^{\alpha} g_l\|_2^2
=
(N_{1}N_{2})^{\alpha} \sum_{p_1=0}^{N_1-1}\sum_{p_2=0}^{N_2-1}
\sum_{l=1}^{L}
|\widehat{\mathbf{g}}_{l}^{\alpha}(r_{p_1},r_{p_2})|^{2}
|\gamma_{p_1}(n_1)\gamma_{p_2}(n_2)|^{2}
\geq\sum_{l=1}^{L}\left|\widehat{\mathbf{g}}_{l}^{\alpha}(0,0)\right|^{2}>0
.$$
taking the minimum and maximum of $\sum\limits_{l=1}^L \|T_{n_1,n_2}^{\alpha} g_l\|_2^2$, then we have the lower frame bound $A>0$. 
Thus, $\mathcal{G}^{l\alpha}$ is a frame with lower and upper frame bounds defined in \eqref{eq2dmw05} and \eqref{eq2dmw06}.

\section{Proof of Theorem \ref{th2d01}}\label{2datheorem03}

It can be seen from Definition \ref{def2dmw1} that

$$\begin{aligned}
	&
	\sum_{i_{1}=1}^{N_{1}}\sum_{i_{2}=1}^{N_{2}}
	\sum_{k_{1}=0}^{N_{1}-1}\sum_{k_{2}=0}^{N_{2}-1}
	Wf(g_{i_1,i_2;k_1,k_2}^{(l\alpha)})
	g_{i_1,i_2;k_1,k_2}^{(l\alpha)}(n_1,n_2)\\
	=&
	\sum_{l=1}^{L}\sum_{i_{1}=1}^{N_{1}}\sum_{i_{2}=1}^{N_{2}}
	\sum_{k_{1}=0}^{N_{1}-1}\sum_{k_{2}=0}^{N_{2}-1}
	(N_1N_2)^\alpha
	\sum_{n_1=1}^{N_1}\sum_{n_2=1}^{N_2}f(n_1,n_2)
	\overline{\gamma_{k_1}^{(1)}(n_1)\gamma_{k_2}^{(2)}(n_2)}
	\left(\sum_{p_1=0}^{N_1-1}\sum_{p_2=0}^{N_2-1}
	\widehat{\mathbf{g}}_{l}^{\alpha*}\left(r_{p_1}^{(1)}+r_{p_2}^{(2)}\right)
	\gamma_{p_1}^{(1)}(i_1)\gamma_{p_2}^{(2)}(i_2)
	\overline{\gamma_{p_1}^{(1)}(n_1)\gamma_{p_2}^{(2)}(n_2)} \right) \\
	& \times
	(N_1N_2)^\alpha
	\gamma_{k_1}^{(1)}(d_1)
	\gamma_{k_2}^{(2)}(d_2)
	\left(\sum\limits_{q_1=0}^{N_1-1}\sum\limits_{q_2=0}^{N_2-1}
	\widehat{g}_l^{\alpha}\left(r_{q_1}^{(1)}+r_{q_2}^{(2)}\right)
	\overline{\gamma_{q_1}^{(1)}(i_1)\gamma_{q_2}^{(2)}(i_2)}
	\gamma_{q_1}^{(1)}(d_1)\gamma_{q_2}^{(2)}(d_2) \right)\\
	=&
	(N_1N_2)^{2\alpha}
	\sum_{l=1}^{L}
	\sum_{n_1=1}^{N_1}\sum_{n_2=1}^{N_2}f(n_1,n_2)
	\left(\sum_{p_1=0}^{N_1-1}\sum_{p_2=0}^{N_2-1}\sum_{q_1=0}^{N_1-1}\sum_{q_2=0}^{N_2-1}
	\widehat{\mathbf{g}}_{l}^{\alpha*}\left(r_{p_1}^{(1)}+r_{p_2}^{(2)}\right)
	\widehat{\mathbf{g}}_{l}^{\alpha}\left(r_{q_1}^{(1)}+r_{q_2}^{(2)}\right)
	\overline{\gamma_{p_1}^{(1)}(n_1)\gamma_{p_2}^{(2)}(n_2)} 
	\gamma_{q_1}^{(1)}(d_1)\gamma_{q_2}^{(2)}(d_2) \right) \\
	& \times \sum_{i_{1}=1}^{N_{1}}\sum_{i_{2}=1}^{N_{2}}
	\gamma_{p_1}^{(1)}(i_1)\gamma_{p_2}^{(2)}(i_2)
	\overline{\gamma_{q_1}^{(1)}(i_1)\gamma_{q_2}^{(2)}(i_2)}
	\sum_{k_{1}=0}^{N_{1}-1}\sum_{k_{2}=0}^{N_{2}-1} \overline{\gamma_{k_1}^{(1)}(n_1)\gamma_{k_2}^{(2)}(n_2)}
	\gamma_{k_1}^{(1)}(d_1)
	\gamma_{k_2}^{(2)}(d_2)\\
	=&
	(N_1N_2)^{2\alpha}
	\sum_{l=1}^{L}
	\sum_{n_1=1}^{N_1}\sum_{n_2=1}^{N_2}f(n_1,n_2)
	\left(\sum_{p_1=0}^{N_1-1}\sum_{p_2=0}^{N_2-1}\sum_{q_1=0}^{N_1-1}\sum_{q_2=0}^{N_2-1}
	\widehat{\mathbf{g}}_{l}^{\alpha*}\left(r_{p_1}^{(1)}+r_{p_2}^{(2)}\right)
	\widehat{\mathbf{g}}_{l}^{\alpha}\left(r_{q_1}^{(1)}+r_{q_2}^{(2)}\right)
	\overline{\gamma_{p_1}^{(1)}(n_1)\gamma_{p_2}^{(2)}(n_2)} 
	\gamma_{q_1}^{(1)}(d_1)\gamma_{q_2}^{(2)}(d_2) \right) \\
	& \times 
	\delta_{(p_{1},p_{2})(q_{1},q_{2})}\delta_{(n_{1},n_{2})(d_{1},d_{2})}\\
	=&
	(N_1N_2)^{2\alpha}
	f(n_1,n_2)
	\sum_{l=1}^{L}
	\sum_{p_1=0}^{N_1-1}\sum_{p_2=0}^{N_2-1}
	\left|\widehat{\mathbf{g}}_{l}^{\alpha*}\left(r_{p_1}^{(1)}+r_{p_2}^{(2)}\right)\right|^2
	\left|\overline{\gamma_{p_1}^{(1)}(n_1)\gamma_{p_2}^{(2)}(n_2)}\right|^2 
	.\\
\end{aligned}$$
Moreover,according to Definition \ref{def2dmwt2},it can be found that
$$\begin{aligned}
	\sum\limits_{l=1}^{L}\left\|T_{i_{1},i_{2}}^{\alpha}g_l\right\|_{2}^{2}
	=&
	\sum\limits_{l=1}^{L}
	\sum_{n_{1}=1}^{N_{1}}\sum_{n_{2}=1}^{N_{2}}\left(
	(N_1N_2)^{\alpha/2} \sum_{p_1=0}^{N_1-1}\sum_{p_2=0}^{N_2-1}
	\widehat{{g}}_l^{\alpha}(r_{p_1}^{(1)}+r_{p_2}^{(2)}) \overline{\gamma_{p_1}^{(1)}(i_1)\gamma_{p_2}^{(2)}(i_2)}
	\gamma_{p_1}^{(1)}(n_1)\gamma_{p_2}^{(2)}(n_2)
	\right)^2\\
	=&
	(N_1N_2)^{\alpha}\sum\limits_{l=1}^{L}
	\sum_{p_1=0}^{N_1-1}\sum_{p_2=0}^{N_2-1}
	\sum_{p_1^\prime=0}^{N_1-1}\sum_{p_2^\prime=0}^{N_2-1}
	\widehat{{g}}_l^{\alpha}(r_{p_1}^{(1)}+r_{p_2}^{(2)})
	\widehat{{g}}_l^{\alpha}(r_{p_1^\prime}^{(1)}+r_{p_2^\prime}^{(2)}) 
	\overline{\gamma_{p_1}^{(1)}(i_1)\gamma_{p_2}^{(2)}(i_2)}
	\overline{\gamma_{p_1^\prime}^{(1)}(i_1)\gamma_{p_2^\prime}^{(2)}(i_2)}\\
	& \times 
	\sum_{n_{1}=1}^{N_{1}}\sum_{n_{2}=1}^{N_{2}} \gamma_{p_1}^{(1)}(n_1)\gamma_{p_2}^{(2)}(n_2)
	\gamma_{p_1^\prime
	}^{(1)}(n_1)\gamma_{p_2^\prime
	}^{(2)}(n_2)\\
	=&  
	(N_1N_2)^{\alpha}\sum\limits_{l=1}^{L}
	\sum_{p_1=0}^{N_1-1}\sum_{p_2=0}^{N_2-1}
	\left|\widehat{{g}}_{l}^{\alpha}\left(r_{p_1}^{(1)}+r_{p_2}^{(2)}\right)\right|^2
	\left|\overline{\gamma_{p_1}^{(1)}(n_1)\gamma_{p_2}^{(2)}(n_2)}\right|^2 
	.\end{aligned}$$
Thus, it follows that 
$$\begin{aligned}
	\sum_{i_{1}=1}^{N_{1}}\sum_{i_{2}=1}^{N_{2}}
	\sum_{k_{1}=0}^{N_{1}-1}\sum_{k_{2}=0}^{N_{2}-1}
	Wf(g_{i_1,i_2;k_1,k_2}^{(l\alpha)})
	g_{i_1,i_2;k_1,k_2}^{(l\alpha)}(n_1,n_2)
	=	(N_1N_2)^{\alpha}
	f(n_1,n_2)	\sum\limits_{l=1}^{L}\left\|T_{i_{1},i_{2}}^{\alpha}g_l\right\|_{2}^{2}
	.
\end{aligned}$$
Therefore,there is
$$\begin{aligned}
	f(n_1,n_2)=\frac{1}{(N_{1}N_{2})^{\alpha} \sum\limits_{l=1}^{L}\left\|T_{i_{1},i_{2}}^{\alpha}g_l\right\|_{2}^{2}}
	\times\sum_{i_{1}=1}^{N_{1}}\sum_{i_{2}=1}^{N_{2}}
	\sum_{k_{1}=0}^{N_{1}-1}\sum_{k_{2}=0}^{N_{2}-1}
	Wf(g_{i_1,i_2;k_1,k_2}^{(l\alpha)})
	g_{i_1,i_2;k_1,k_2}^{(l\alpha)}(n_1,n_2).
\end{aligned}$$

\bibliographystyle{elsarticle-num}
\bibliography{main}

\end{document}